\DeclareFontFamily{OT1}{pzc}{}
\DeclareFontShape{OT1}{pzc}{m}{it}{<-> s * [1.10] pzcmi7t}{}
\DeclareMathAlphabet{\mathpzc}{OT1}{pzc}{m}{it}
\newcommand*{\doi}[1]{\href{http://dx.doi.org/#1}{doi: #1}}
\newcommand*\bigcdot{\mathpalette\bigcdot@{.5}}
\newcommand*\bigcdot@[2]{\mathbin{\vcenter{\hbox{\scalebox{#2}{$\m@th#1\bullet$}}}}}
\colorlet{shadecolor}{gray!60}
\numberwithin{equation}{section}
\numberwithin{figure}{section}
\theoremstyle{plain}
\newtheorem{theorem}{Theorem}[section]
\newtheorem{proposition}[theorem]{Proposition}
\theoremstyle{definition}
\newtheorem{definition}[theorem]{Definition}
\newtheorem{example}[theorem]{Example}
\newtheorem{remark}[theorem]{Remark}
\tikzset{
    >=stealth',
    punkt/.style={
           rectangle,
           rounded corners,
           draw=black, thick,
           text width=5.5em,
           minimum height=2em,
           text centered},
    punktl/.style={
           rectangle,
           rounded corners,
           draw=black, thick,
           text width=7em,
           minimum height=2em,
           text centered},
    punktll/.style={
           rectangle,
           rounded corners,
           draw=black, thick,
           text width=9em,
           minimum height=2em,
           text centered},           
    pil/.style={
           ->,
           shorten <=4pt,
       shorten >=4pt
    },
    pildotted/.style={
           ->,
           shorten <=4pt,
           shorten >=4pt,
  dotted,
  },
    punktf/.style={
           rectangle,
           text width=4.0em,
           minimum height=1.5em,
           text centered},
    punktfTop/.style={
           rectangle,
           text width=4.0em,
           minimum height=1.5em,
           text centered,
           append after command={
               [thick,shorten >=0.2bp, shorten <=0.2bp]
               (\tikzlastnode.north west)edge(\tikzlastnode.north east)
}
    },
    punktfBot/.style={
           rectangle,
           text width=4.0em,
           minimum height=1.5em,
           text centered,
           append after command={
               [thick,shorten >=0.2bp, shorten <=0.2bp]
               (\tikzlastnode.south west)edge(\tikzlastnode.south east)
            }
    }
}
\newcommand{\indep}{\perp \!\!\! \perp}
\colorlet{lightred}{red!20}
\colorlet{lightgray}{gray!25}
\definecolor{platinum}{RGB}{229,228,226}
\definecolor{ash}{RGB}{178,190,181}
\definecolor{lightblue}{HTML}{b3cde0}
\newcommand\xqed[1]{%
  \leavevmode\unskip\penalty9999 \hbox{}\nobreak\hfill
  \quad\hbox{#1}}
\newcommand\demormk{\xqed{$\triangledown$}}
\newcommand\demoex{\xqed{$\circ$}}
\newcommand\demodef{\xqed{$\triangle$}}
\newcommand{\pushright}[1]{\ifmeasuring@#1\else\omit\hfill$\displaystyle#1$\fi\ignorespaces}
\newcommand{\pushleft}[1]{\ifmeasuring@#1\else\omit$\displaystyle#1$\hfill\fi\ignorespaces}
\DeclareFontFamily{OMX}{MnSymbolE}{}
\DeclareSymbolFont{MnLargeSymbols}{OMX}{MnSymbolE}{m}{n}
\DeclareFontShape{OMX}{MnSymbolE}{m}{n}{
    <-6>  MnSymbolE5
   <6-7>  MnSymbolE6
   <7-8>  MnSymbolE7
   <8-9>  MnSymbolE8
   <9-10> MnSymbolE9
  <10-12> MnSymbolE10
  <12->   MnSymbolE12
}{}
\DeclareFontShape{OMX}{MnSymbolE}{b}{n}{
    <-6>  MnSymbolE-Bold5
   <6-7>  MnSymbolE-Bold6
   <7-8>  MnSymbolE-Bold7
   <8-9>  MnSymbolE-Bold8
   <9-10> MnSymbolE-Bold9
  <10-12> MnSymbolE-Bold10
  <12->   MnSymbolE-Bold12
}{}
\let\llangle\@undefined
\let\rrangle\@undefined
\DeclareMathDelimiter{\llangle}{\mathopen}%
                     {MnLargeSymbols}{'164}{MnLargeSymbols}{'164}
\DeclareMathDelimiter{\rrangle}{\mathclose}%
                     {MnLargeSymbols}{'171}{MnLargeSymbols}{'171}
\newcommand*\diff{\mathop{}\!\mathrm{d}}
\newcommand{\stkout}[1]{\ifmmode\text{\sout{\ensuremath{#1}}}\else\sout{#1}\fi}
\title{Transaction time models in multi-state life insurance} 
\author[1]{Kristian Buchardt}
\author[2]{Christian Furrer}
\author[1,2,$\star$]{Oliver Lunding Sandqvist}
\affil[1]{\footnotesize PFA Pension, Sundkrogsgade 4, DK-2100 Copenhagen \O, Denmark.}
\affil[2]{\footnotesize Department of Mathematical Sciences, University of Copenhagen, Universitetsparken 5, DK-2100 Copenhagen \O, Denmark.}
\affil[$\star$]{\footnotesize Corresponding author. E-mail: \href{mailto:oliver.s@math.ku.dk}{oliver.s@math.ku.dk}.}
\date{\vspace{-8mm}}
\begin{document}

\maketitle

\begin{abstract}
In life insurance contracts, benefits and premiums are typically paid contingent on the biometric state of the insured. Due to delays between the occurrence, reporting, and settlement of changes to the biometric state, the state process is not fully observable in real-time. This fact implies that the classic multi-state models for the biometric state of the insured are not able to describe the development of the policy in real-time, which encompasses handling of incurred-but-not-reported and reported-but-not-settled claims. We give a fundamental treatment of the problem in the setting of continuous-time multi-state life insurance by introducing a new class of models: transaction time models. The relation between the transaction time model and the classic model is studied and a result linking the present values in the two models is derived. The results and their practical implications are illustrated for disability coverages, where we obtain explicit expressions for the transaction time reserve in specific models.
\end{abstract}

\vspace{5mm}

\noindent \textbf{Keywords:} Prospective reserves; Disability insurance; Claims reserves; Valid and real-time; Piecewise deterministic processes.

\vspace{5mm}

\noindent \textbf{2020 Mathematics Subject Classification:} 91G05; 60J76.

\vspace{2mm}

\noindent \textbf{JEL Classification:} G22; C02.

\vspace{5mm}

\section{Introduction} \label{sec:Introduction}

The payments stipulated in life insurance contracts are usually an agreement on what payments are to be made for different possible outcomes of the biometric state of the insured (e.g.\ whether the insured is active, disabled, dead, etc.). For this reason, multi-state life insurance models take modeling of the biometric state of the insured as their starting point. The multi-state approach to life insurance dates back to at least~\citet{Hoem1969}. Here, the prospective reserve is defined as the discounted probability-weighted future payments, which, as noted in~\citet{Norberg1991}, corresponds to the expected present value of future payments given the information generated by the biometric state process. The introduction of an underlying stochastic process generating the payments introduces structure to the problem of predicting the cash flow at future points in time, due to the temporal dependencies of the process. This added structure of the payments is not in itself an assumption when the payments stipulated in the insurance contracts are formulated in terms of the biometric state of the insured. It is rather a way to introduce more a priori knowledge about the workings of the product into the mathematical model. All other things being equal, this makes the models more powerful.

Consequently, multi-state modeling seems a natural approach to modeling life insurance products. However, in the multi-state modeling literature, one also often assumes that the biometric state process generating the payments equals the process that generates the available information, see e.g.~\citet{Norberg1991}, ~\citet{BuchardtMollerSchmidt2015}, \citet{Djehiche2016}, ~\citet{Bladt2020}, and~\citet{Christiansen2021} to name a few. This is rarely the case, since information about changes to the biometric state can be delayed or erroneous. A simple example of this phenomenon is the delay that occurs when an insured becomes disabled; it might take some time for the insured to report the event to the insurer. Between the occurrence of the disability and the time of reporting, the claim is an IBNR (Incurred-But-Not-Reported) claim. As long as the insured has not reported the disability, the insurer will continue believing that the insured is active.  Hence, the information that the insurer has is different from the full information about the biometric state of the insured. To describe this phenomenon in more detail, and discuss how to approach reserving under the insurer's available information, we introduce the concepts of \textit{valid time} and \textit{transaction time} in the next section: Essentially, the valid time of an event is the time that it occurs, while the transaction time is the time that the event is registered in the insurers records.

It turns out that these concepts are also useful in clarifying the similarities and differences between life and non-life insurance products as well as between the models employed in the respective fields. The fact that payments in life insurance are deterministic functions of the biometric state process makes it so one does not have to estimate a separate distribution for the payment sizes; once the distribution of the state process is specified, the distribution of the payments follows. This is not the case in non-life insurance, and one therefore resorts to modeling the distribution of the observed payments directly. However, as will be explained, the biometric state process is a valid time object, while the observed payments are transaction time objects. This fact leads to key differences in the life and non-life insurance models. One such key difference is that it is more straightforward to formulate IBNR and RBNS (Reported-But-Not-Settled) models in non-life insurance, as one can construct the RBNS model entirely in transaction time. The IBNR model may subsequently be constructed in two steps. First, one models the occurrence times of claims, which are valid time objects, as well as the corresponding reporting delays. Second, one leverages the RBNS model to find the expected cash flow of a claim conditional on the time of occurrence and the reporting delay. In life insurance models, one has to link the valid time payments to the transaction time concepts of IBNR and RBNS, and it is not obvious how to do this.

 Our main contributions are: the introduction of the \textit{basic bi-temporal structure assumptions} defined in Section~\ref{sec:TTM}, the derivation of Theorem~\ref{theorem:PZequalsPYR} that links the present values in valid and transaction time, and an application of this theorem, namely Example~\ref{ex:SimpleReserving}. The first of these contributions establishes an explicit link between transaction and valid time processes. The second utilizes this link to obtain a tractable relation between the present values in valid and transaction time. To further obtain a tractable relation between the valid and transaction time reserves, more structure on how the transaction time information affects the distribution of the valid time process needs to be imposed. This is exactly what is explored in a simple example involving RBNS claims, culminating in Example~\ref{ex:SimpleReserving}, which constitutes the third main contribution. The example is kept simple for illustrative purposes, but our general framework also allows for the study of intricate examples that provide a more complete picture of IBNR and RBNS reserving. Such applications are the raison d'être of the framework.

The paper is structured as follows. In Section~\ref{sec:VTTT}, the terms valid time and transaction time are given more precise definitions and discussed in the context of life insurance. An overview of the use of valid time and transaction time information in the insurance literature is provided, and similarities and differences between the situation in life insurance and non-life insurance are made explicit. The section ends by defining the class of piecewise deterministic processes, which constitute the basic building blocks for our model constructions. Section~\ref{sec:VTM} constructs a model for the insurance contract in valid time similarly to how the classic life insurance multi-state models are constructed. Section~\ref{sec:TTM} introduces the novel concept of a transaction time model corresponding to a valid time model, and this transaction time model is constructed. In Section~\ref{sec:Reserving}, the valid and transaction time reserves are defined, and a result relating the transaction time present value to the valid time present value is derived. In a model for disability insurance where coverage depends on the origin of the disability, we show how this relation can be utilized to obtain a relation between the corresponding reserves. Finally, the dynamics of the transaction time reserve is derived and discussed. 

\section{Valid and transaction time} \label{sec:VTTT}

We now introduce the terms valid time and transaction time. These concepts are used to describe data that arises from a time-varying process. We outline how these types of data are currently being used in the life and non-life insurance literature. Subsequently, we introduce a class of stochastic processes which we use to model processes generating valid time and transaction time data.

\subsubsection*{Valid and transaction time data} \label{subsubsec:ValidTransactData}

The terms valid time and transaction time provide a natural terminology for describing information that is registered with delays and uncertainty. Valid time and transaction time are concepts stemming from the design of databases, specifically temporal databases, where time-varying information is recorded. The valid/transaction time taxonomy was developed in~\citet{Snodgrass1985}. There, valid time is defined as the time that an event occurs in reality, while transaction time is defined as the time when the data concerning the event was stored in the database. Hence, valid time is concerned with when events occurred (historical information), while transaction time is concerned with when events were observed (rollback information).

As noted by~\citet{Snodgrass1985}, an important difference between valid time and transaction time are the types of information updates that are permitted. A transaction time may be added to the database, but is never allowed to be changed after the fact due to the forward motion of time. In contrast, a valid time is always subject to change, since discrepancies between the history as it actually occurred and the representation of the history as stored in the database will often be detected after the fact. The authors argue that both valid time and transaction time are needed to fully capture time-varying behavior.

A database that contains both valid time and transaction time is called a \textit{bi-temporal} database. Such a database supplies both historical and rollback information. Historical information e.g.\ \textit{``Where was Taylor employed during 2010?"} is supplied by valid time, while rollback information e.g.\ \textit{``In 2010, where did the database believe Taylor was employed?"} is supplied by transaction time. Since there may have been changes to the database after 2010, the answers to these questions may be different. The combination of valid time and transaction time supplies information on the form \textit{``In 2015, where did the database believe Taylor was employed during 2010?"}.

To further clarify the concepts introduced above, a detailed example for total permanent disability insurance (or critical illness insurance) is given in Example~\ref{ex:BitemporalDataIbnrRbns}, while Example~\ref{ex:BitemporalDataDisabilityType} is devoted to disability insurance with coverage that depends on the origin of disability.

\begin{example} (Bi-temporal insurance data: Total permanent disability insurance.) \label{ex:BitemporalDataIbnrRbns} \vspace{0.15cm} \\
Consider the following scenario: On 1/1/2020, Taylor buys a total permanent disability insurance effective immediately with a risk period of one year, which pays a sum $b$ if they become disabled before the end of the risk period. For this, Taylor agrees to pay premiums at a rate $\pi$ during the risk period while active. Taylor becomes disabled on 1/3/2020 and reports this to the insurer on 1/5/2020, two months later. On 1/6/2020, one month later, the insurer has finished processing the claim and awards Taylor disability benefits. Furthermore, Taylor is reimbursed for the premiums paid between 1/3/2020 and 1/6/2020.

If the insurer uses a bi-temporal database (valid time and transaction time), the database will at 1/6/2020 contain the following entries:

\begin{figure}[H]
\centering
\includegraphics[scale=1.1]{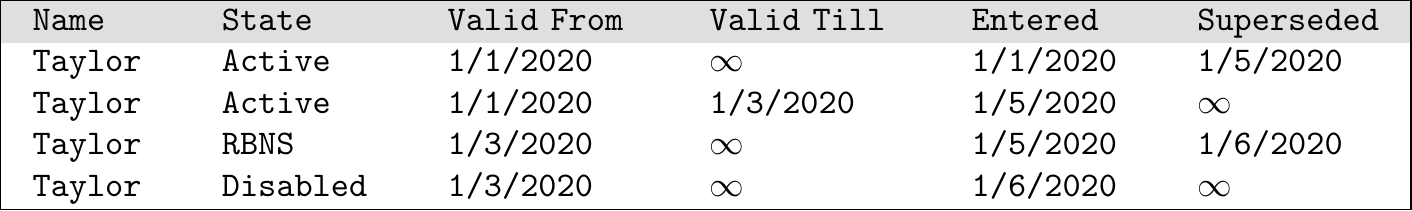}
\end{figure}

\noindent Hence we see that the database records not only what happened in the `real world', but also what was officially recorded at different times. Note that when it is not known when the information is valid till, the database by convention records the timestamp $\infty$. This is likewise the case when it is unknown when the entry will be superseded. Hence, to acquire the most recent belief about when events occurred, one would extract the rows where \texttt{Superseded} was $\infty$.

If the database was uni-temporal (valid time), the entries at 1/6/2020 would be:

\begin{figure}[H]
\centering
\includegraphics[scale=1.1]{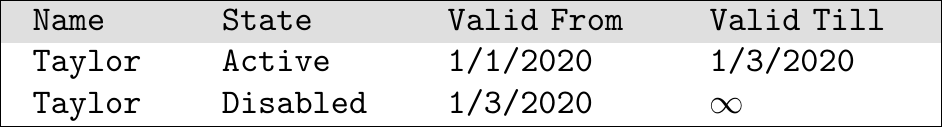}
\end{figure}

\noindent Similarly, at 1/4/2020 the entries would be:

\begin{figure}[H]
\centering
\includegraphics[scale=1.1]{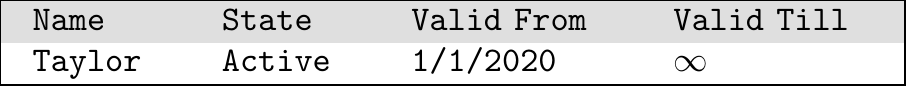}
\end{figure}

\noindent even though Taylor is already disabled at this time, due to the fact that this has not been reported to the insurer yet. 

From these tables, we can see that, as described in~\citet{Snodgrass1985}, different information updates are permitted for a bi-temporal and a uni-temporal database. The uni-temporal database, in contrast to the bi-temporal database, only records what happened in the `real world' based on the newest information. Previous records are modified or deleted. \demoex
\end{example}

\begin{example} (Bi-temporal insurance data: Disability insurance with different origins.) \label{ex:BitemporalDataDisabilityType} \vspace{0.15cm} \\
Consider the following scenario: On 1/1/2020, Jessie buys a disability insurance effective immediately, that pays a rate $b_{i_1}$ if they are affected by a work-related disability (WD) and a rate $b_{i_2}$ if they are affected by a non-work-related disability (NWD). For this, Jessie agrees to pay the premium rate $\pi$ while active. Jessie becomes disabled on 1/5/2020 and reports this to the insurer instantly. The insurer immediately evaluates the disability to have its origin outside of the workplace and therefore pays the rate $b_{i_2}$ starting 1/5/2020. At 1/7/2020, the decision is reevaulated and it is concluded that the disability has its origin at the workplace. Consequently, there is a payment between the insurer and Jessie corresponding to the difference in rates between 1/5/2020 and 1/7/2020, and onward Jessie receives the rate $b_{i_1}$. Nothing else occurs before 1/1/2021.    

If the insurer uses a bi-temporal database (valid time and transaction time), the database will at 1/1/2021 contain the following entries:

\begin{figure}[H]
\centering
\includegraphics[scale=1.1]{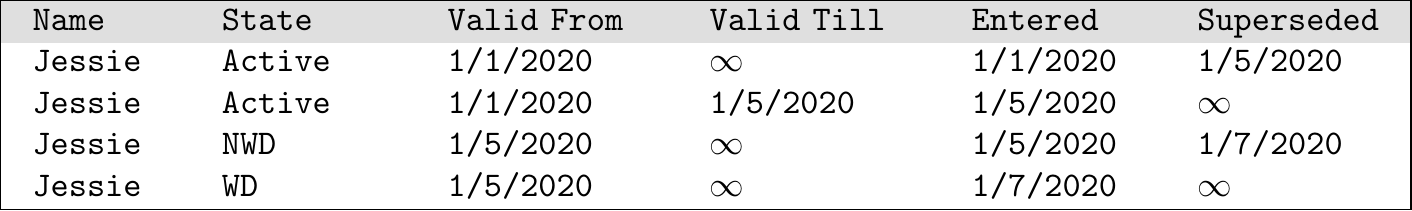}
\end{figure}

\noindent If the database was uni-temporal (valid time), the entries at 1/6/2020 would be:
\begin{figure}[H]
\centering
\includegraphics[scale=1.1]{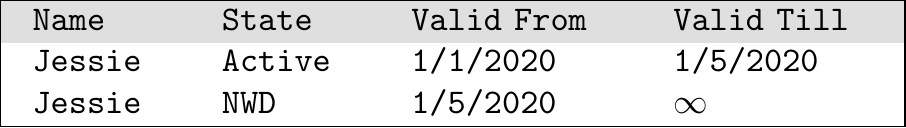}
\end{figure}

\noindent At 1/8/2020 the entries would be:

\begin{figure}[H]
\centering
\includegraphics[scale=1.1]{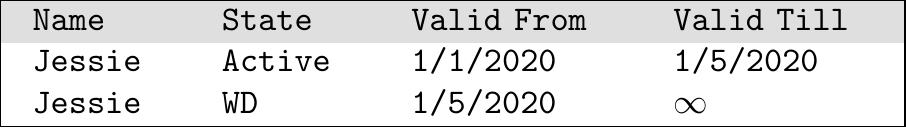}
\end{figure}

\noindent Similar to Example~\ref{ex:BitemporalDataIbnrRbns}, we see a clear need for a bi-temporal database compared to a uni-temporal database. \demoex
\end{example}

\noindent In Example~\ref{ex:BitemporalDataIbnrRbns}, the claim is first an IBNR and later an RBNS claim, while in Example~\ref{ex:BitemporalDataDisabilityType}, the claim is an RBNS claim, but since reporting occurred with no delay, it is not an IBNR claim beforehand. In the following, we illustrate our methods and results on the latter example. Our methods are, however, not constrained to bi-temporal data on the above form, but may be applied to essentially any kind of bi-temporal insurance data.

Practitioners should be well-acquainted with bi-temporal insurance data. Bi-temporal data is important for internal use, as it is needed for reproducibility of statistical analyses when these are based on queries to a database. This is because reproducibility requires rollback information, since one has to recreate the information that the database had at a previous point in time. It also enables one to understand the difference between two otherwise identical analyses, performed at two different points in time. It is also important for external use, since auditors and regulatory authorities may inquire about financial reports from foregone years, making it important for insurers to be able to recreate the prerequisites that a given financial report was based on. As an example of this, Danish life insurance companies are required by law to publish all figures in the balance sheet and income statement of their financial reports for both the current and the previous year. Key figures have to be reported for the past five years. If prior financial reports have been affected by serious errors, the newest report has to publish figures for previous years as if the error had not been committed, so long it is practically feasible, cf.\ \S~86 of~\citet{Erhvervsministeriet2015}.

\subsubsection*{Valid and transaction time information in insurance}

Inspired by the valid and transaction time taxonomy introduced above, and with a slight abuse of the terminology, we define a \textit{valid time process} as a stochastic process that represents the true historical information. We use the notation $X_s$ for the value of the valid time process at time~$s$. With another slight abuse of the terminology, we define a \textit{transaction time process} as a bi-temporal stochastic process that represents both historical and rollback information. We use the notation $X^t_s$ for the value of the process at time~$s$ based on the observations available at time~$t$. When referring to models based on valid time and transaction time processes, we use the terms \textit{valid time model} and \textit{transaction time model}, respectively. Current multi-state models for the biometric state of the insured are seen to be valid time models, as they model a process that describes when events occur without regard to when that information is observed by the insurer. However, in practice there is typically at least some delay in information concerning occurrence, reporting and settlement of claims. This necessitates additional model components, say in the form of models for IBNR and RBNS claims. Such model components are called claims reserving models or simply IBNR and RBNS models. 

Claims reserving based on individual claims data have been subject to much study  in non-life insurance, see e.g.~\citet{Norberg1993,Norberg1999}, \citet{Haastrup1996}, \citet{Antonio2014}, \citet{Badescu2016,Badescu2019}, \citet{Lopez2019}, \citet{Bischofberger2020}, \citet{Delong2021}, \citet{Crevecoeur2022a,Crevecoeur2022b}, and~\citet{Okine2022}. Early research, including~\citet{Norberg1993,Norberg1999}, focuses on joint modeling of all aspects of a claim and the subsequent computation of relevant conditional expectations of future cash flows as high-dimensional integrals with respect to the joint distribution. This may pose significant statistical and numerical challenges, and initially only limited attention was given to practical implementation. Recent research, including~\citet{Crevecoeur2022a,Crevecoeur2022b} and~\citet{Okine2022}, instead focuses on how the expected future cash flows depend on previous observations, typically payment times and payment sizes, and the corresponding factorization of the joint distribution into conditional and marginal parts. In particular, this improves interpretability and readily allows for dynamic reserving where current information is incorporated into the best estimate of future liabilities. This research has hitherto largely consisted of proposing statistical procedures and performing data-driven investigations. Regarding the latter, one attempts to identify the best predictive models by exploring which covariates are advantageous to include (thereby also determining the degree of individualization/collectivization) as well as exploring which statistical models to apply (e.g.\ parametric models such as GLMs, non-parametric models such as empirical distributions, or machine learning methods such as neural networks). 

The primary reason that the life insurance literature has not been similarly occupied with finding suitable statistical models is, as explained in Section~\ref{sec:Introduction}, that the payments stipulated in life insurance contracts are specified in terms of an underlying valid time state process. This implies that the conditional distribution of future jump times given previous observations can be obtained by estimating the distribution of the valid time state process, and a model for the conditional distribution of future payment sizes then follows automatically.  Since the state process may almost always be formulated as a piecewise deterministic process, a point which we discuss in more detail later, the estimation may be performed using standard methods from survival and event history analysis. The state process formulation also allows for more explanatory models compared with the purely predictive models of non-life insurance that result from modeling the transaction time payment process directly.

The problem with the current approach in life insurance of ignoring the transaction time state process is that the alternative state process, namely the valid time state process, is not directly observable. In practice, one partly accounts for this via improvised IBNR corrections and RBNS corrections (on an aggregate level). Compared to the vast literature on IBNR and RBNS models in non-life insurance, the corresponding problem of claims reserving in life insurance has received limited attention. In the following, we seek to amend this.

Since the payments stipulated in a life insurance contract are generated by a valid time process, while the actually observed data is generated by a transaction time process, it becomes essential to establish an in-depth understanding of the relation between valid and transaction time processes in life insurance. The main purpose of this paper is to develop a conceptual and mathematical framework where this relation can be formulated and explored.

\subsubsection*{Piecewise deterministic processes}

In order to achieve as much generality as possible, we take \emph{piecewise deterministic processes} (PDPs) as the starting point for our models. The data generated by a stochastic process and recorded in a database only uniquely determines the path of the stochastic process if the stochastic process is a PDP. Informally, these are processes that have finitely many jumps on finite time intervals and which develop deterministically between the random jump times. This is because the value of the process is only recorded in the database at certain discrete times (e.g.\ daily, monthly, or at jump times), from which the entire path of the process must be inferred. If the data stems from e.g.\ a Brownian motion, which is not piecewise deterministic, then the data can only provide an approximation of the path taken, and the gaps between recorded values must be filled using some algorithm, e.g.\ modeling the process to be linearly evolving between recorded values. In this sense, PDPs provide the most general class of processes for which valid time and transaction time processes can be constructed.

We now define piecewise deterministic processes following Subsection~3.3 in~\citet{Jacobsen2006}.  To keep the exposition from becoming unnecessarily technical, we omit some mathematical details such as the exact form of measurability for certain functions. The interested reader may consult Section 3 of~\citet{Jacobsen2006} for an explicit construction of the basic measurable spaces. In general, further mappings from these spaces are taken to have the image of the mappings equipped with the pushforward $\sigma$-algebra as the codomain space. Let the background probability space be denoted by $(\Omega,\mathbb{F},\mathbb{P})$. Let $(E,\mathcal{E})$ be a measurable space called the \textit{mark space}. Introduce the \textit{irrelevant} mark $\nabla$ denoting the mark of a jump that does not occur in finite time, and set $\overline{E}=E \cup \{ \nabla \}$.

\begin{definition} (Simple point processes.) \vspace{0.15cm} \\
A \textit{simple point process} (SPP) is a sequence $\mathcal{T}=(T_n)_{n \in \mathbb{N}}$ of $[0,\infty]$-valued random variables defined on $(\Omega,\mathbb{F},\mathbb{P})$ such that
\begin{enumerate}[label=(\roman*)]
    \item $\mathbb{P}(0 < T_1 \leq T_2 \leq \dots ) = 1$
    \item $\mathbb{P}(T_n < T_{n+1}, T_n < \infty) = \mathbb{P}(T_n < \infty)$
    \item $\mathbb{P}(\lim_{n \rightarrow \infty} T_n = \infty ) = 1$.\demodef
\end{enumerate}
\end{definition}

If condition (iii) is removed, one obtains the class of simple point processes allowing for \textit{explosion}. In this paper, we limit the study to processes without explosion.

\begin{definition} (Marked point processes.) \vspace{0.15cm} \\
A \textit{marked point process} (MPP) with mark space $E$ is a double-sequence $(\mathcal{T},\mathcal{Y})= \\ ((T_n)_{n \in \mathbb{N}},(Y_n)_{n \in \mathbb{N}})$
of $(0,\infty]$-valued random variables $T_n$ and $\overline{E}$-valued random variable $Y_n$ defined on $(\Omega,\mathbb{F},\mathbb{P})$ such that $\mathcal{T}=(T_n)_{n \in \mathbb{N}}$ is an SPP and such that
\begin{enumerate}[label=(\roman*)]
    \item $\mathbb{P}(Y_n \in E, T_n < \infty) = \mathbb{P}(T_n < \infty)$
    \item $\mathbb{P}(Y_n = \nabla, T_n = \infty) = \mathbb{P}(T_n = \infty)$.\demodef
\end{enumerate}
\end{definition}

\noindent For a given MPP $(\mathcal{T},\mathcal{Y})$, define
\begin{align*}
\langle t \rangle = \sum_{n=1}^\infty 1_{(T_n \leq t)}
\end{align*}
being the number of events in the time interval $[0,t]$, and define
\begin{align*}
H_{t} = (T_1,...,T_{\langle t \rangle}; Y_1,...,Y_{\langle t \rangle})
\end{align*}
being the jump times and marks observed up until and including time~$t$. We refer to $H_t$ as the \textit{MPP history} at time~$t$.

\begin{definition} (Piecewise deterministic process.) \vspace{0.15cm} \\
A piecewise deterministic process with state space $(E,\mathcal{E})$ is an $E$-valued stochastic process $X$ satisfying
\begin{align*}
X_t = f^{\langle t \rangle}_{H_t \mid x_0}(t),
\end{align*}
where $X_0 = x_0$ is non-random, and for every $n \in \mathbb{N}_0$, $f^{n}_{h_n \mid x_0}(t)$ is a measurable $E$-valued function of $h_n=(t_1,...,t_n;y_1,...,y_n)$ with $t_n < \infty$, of $t \geq t_n$, and of $x_0$, satisfying the conditions
\begin{align*}
f^{n}_{h_n \mid x_0}(t_n) = y_n
\end{align*}
for all $h_n$ and $f^{0}_{ \mid x_0}(0) = x_0$.\demodef
\end{definition}

To ensure the existence of relevant conditional distributions going forward, we henceforth assume that the mark space $(E,\mathcal{E})$ is a Borel space. We refer to the functions $f^{n}_{h_n \mid x_0}$ as \textit{evolution functions} of $X$. The explicit dependence on $n$ is standard and serves to highlight that the functional relation may depend on the cardinality of $h_n$. However, to ease notation, we henceforth suppress $x_0$ and $n$, writing for instance simply $f_{h_n}$. It is easily seen that the class of piecewise deterministic processes encompasses the usual choices for the biometric state process, cf.~Example~\ref{ex:PDPStateProcess},~\ref{ex:PDMPStateProcess},~\ref{ex:PDMPMarkovChain}, and~\ref{ex:PDMPSemiMarkov}. 
\begin{example} (Pure jump process.) \label{ex:PDPStateProcess} \vspace{0.15cm} \\
Let $(E,\mathcal{E})$ be a Borel space, and let $X$ be a pure jump process taking values in $E$ which models the biometric state process of the insured. Assume that the initial state of $X$ is fixed, and denote this initial state by $x_0$. A pure jump process is here taken to mean a càdlàg stochastic process with only finitely many jumps in any finite time interval satisfying
\begin{align*}
X_t = \sum_{0 < s \leq t} \Delta X_s
\end{align*}
for $\Delta X_s = X_s-X_{s-}$. We define an MPP $(\mathcal{T},\mathcal{Y})=((T_n)_{n \in \mathbb{N}},(Y_n)_{n \in \mathbb{N}})$ from $X$ by letting $T_n$ be the time of the $n$'th jump of $X$ and setting $Y_n = X_{T_n}$. It is then easy to show that $X$ can be reconstructed from the MPP. To do this, let $f_{h_n}(t) = y_n$, and let $H$ be the MPP history of $(\mathcal{T},\mathcal{Y})$. Then
\begin{align*}
 f_{H_t}(t) = Y_{\langle t \rangle} = X_{T_{\langle t \rangle}} =  X_t,
\end{align*}
so $X$ is a PDP. \demoex
\end{example}

\begin{example} (Pure jump Markov process.) \label{ex:PDMPStateProcess} \vspace{0.15cm} \\
Following the previous example, let $\mathcal{F}^X_t = \sigma(X_s, \: 0 \leq s \leq t)$ be the filtration generated by $X$. For computational tractablility, one often assumes that $X$ is a \textit{Markov process}, meaning
for every $s \leq t$ and $C \in \mathcal{E}$:
\begin{align*}
\mathbb{P}(X_t \in C \mid \mathcal{F}^X_s) = \mathbb{P}(X_t \in C \mid X_s)
\end{align*}
$\mathbb{P}$-a.s., so that the behavior of the process at a future time point is only dependent on the past behavior of the process through the current state of the process. This is called the \textit{Markov property}. Pure jump Markov processes $X$ are a generalization of the usual choices of the state process $Z$ found in the multi-state life insurance literature, as is shown in Example~\ref{ex:PDMPMarkovChain} and Example~\ref{ex:PDMPSemiMarkov}.
\demoex
\end{example}

\begin{example} (Continuous-time Markov chain.) \label{ex:PDMPMarkovChain} \vspace{0.15cm} \\
A continuous-time Markov chain $Z$ with fixed initial state is defined as a piecewise constant Markov process, see e.g.\ Chapter 2 of~\citet{Norris1998} or Section 7.2 of~\citet{Jacobsen2006}. In the life insurance literature, the Markov chain is usually assumed to take values in a finite state space, see e.g.\ \citet{Norberg1991}. A continuous-time Markov chain $Z$ on a finite state space $\{1,2,...,J\}$ for  $J \in \mathbb{N}$ can thus be constructed from a pure jump Markov process $X$ with $E=\{1,2,...,J\}$ by simply setting $Z=X$. \demoex
\end{example}

\begin{example} (Continuous-time semi-Markov process.) \label{ex:PDMPSemiMarkov} \vspace{0.15cm} \\
In many applications, the most recent jump time contains valuable information and hence it may be necessary to include it as a coordinate of $X$ in order to ensure that the Markov property of $X$ holds. For a pure jump process $Z$, define $W$ as the time of the last jump  
\begin{align*}
W_t = \sup \{ 0 \leq s \leq t : Z_s \neq Z_t \}.
\end{align*}
A continuous-time semi-Markov process $Z$ is defined as a pure jump process on a finite state space with the property that $(Z,W)$, equivalently $(Z,U)$ with
\begin{align*}
U_t = t-W_t
\end{align*}
the duration of sojourn in the current state, is a Markov process, see e.g.\ Section~2D in~\citet{Helwich2008}. Semi-Markov models were introduced to life insurance independently by Janssen and Hoem, see~\citet{Janssen1966} and~\citet{Hoem1972}. In the life insurance literature, the parameterization $(Z,U)$ is more common than $(Z,W)$.

For $K \in \mathbb{N}$ define the projection functions $\pi_k : \mathbb{R}^K \mapsto \mathbb{R}$ via $\pi_k (x_1,x_2,...,x_K) = x_k$. If $X$ is a pure jump Markov process with $E=\{1,2,...,J\} \times [0,\infty)$ that satisfies $\pi_2 X_t = T_{\langle t \rangle}$, then a continuous-time semi-Markov process $Z$ may be constructed from $X$ by setting $Z=\pi_1 X$.\demoex
\end{example}

\section{Valid time model} \label{sec:VTM} 
We now introduce the classic multi-state life insurance models. These are valid time models, cf.\ the discussion in Section~\ref{subsubsec:ValidTransactData}. The valid time setup described below is essentially standard in the multi-state life insurance literature, although it is usually only formulated for Markov or semi-Markov processes, see the classics \citet{Hoem1969}, \citet{Hoem1972}, and \citet{Norberg1991}.

\subsubsection*{State process} 
Let the valid time state process $(X_t)_{t \geq 0}$ be an $\mathbb{R}^d$-valued stochastic process for $d \in \mathbb{N}$. We assume that $X$ is a PDP with fixed initial state $x_0$. The jump times are denoted by $\tau_n$.
Let
\begin{align*}
\llangle t \rrangle = \sum_{n=1}^\infty 1_{(\tau_n \leq t)}
\end{align*}
denote the number of jumps by time~$t$ and denote by 
\begin{align*}
H_{t} = (\tau_1,...,\tau_{\llangle  t \rrangle}; X_{\tau_1},...,X_{\tau_{ \llangle t \rrangle } } )
\end{align*}
the MPP history of the process $X$ at time~$t$. The symbols $\tau_n$ and $\llangle t \rrangle$ are used here instead of $T_n$ and $\langle t \rangle$ introduced in the PDP section as the latter are reserved for the transaction time process introduced in Section~\ref{sec:TTM}. Further, let $f_{h_n}$ be the evolution functions of $X$, so that
\begin{align*}
    X_t = f_{H_{t}}(t).
\end{align*}
The information generated by complete observation of the valid time process is given by the filtration $\mathcal{F}^X_t = \sigma( X_s , 0 \leq s \leq t )$.

\subsubsection*{Cash flow}
We now define the valid time cash flow that represents the contractual payments. For later uses, especially to define the transaction time cash flow and to formulate links between present values in valid and transaction time, we need the valid time cash flow to be decoupled from the valid time state process. Denote by $\mathcal{A} \subseteq 2^{[0,\infty)}$ some sufficiently regular collection of sets. For the purposes of this paper, it is sufficient that $\mathcal{A}$ contains the intervals on $[0,\infty)$. Write $x_A$ for $(x_s)_{s \in A}$ with  $A \in \mathcal{A}$. We restrict our attention to $x_A$ for which there exist $(x_s)_{s \in A^c}$ such that $(x_s)_{s \geq 0}$ lies in the image of $X$; here $A^c$ denotes the complement of $A$. Similarly, write $X_A$ for $(X_s)_{s \in A}$. Assume the existence of measurable functions 
\begin{align*}
(x_A,t) \mapsto B(x_A,t) \in \mathbb{R}
\end{align*}
for $t \geq 0$. We interpret $B(x_A,t)$ as the payments generated by the path $x$ on $A \cap [0,t]$. Assume that $t \mapsto B(x_A,t)$ is a càdlàg finite variation function for any $x_A$, so that the measures $B(x_A,\diff t)$ are well-defined. We further assume that the composition $B(X_A,t)$ is incrementally adapted to $\mathcal{F}^X$, meaning that $B(X_A,t)-B(X_A,s)$ is $\sigma(X_v,v \in (s,t])$-measurable for any interval $(s,t] \subseteq [0,\infty)$, cf.\ Definition~2.1 in~\citet{Christiansen2021Time}. For shorthand, we write $B(\diff t) = B(X_{[0,\infty)},\diff t)$. The assumption of incremental adaptedness states that the aggregated payments in $(s,t]$ only depends on the path of $X$ on $(s,t]$. The property incremental adaptedness does not have a critical technical function in this paper, but it clarifies the role of the state process $X$ as the process that determines the contractual payments at a given instance in time. Therefore, we may discuss changes in information and payments through changes to $X$, which allows for practical interpretations that are more intuitive. From a purely mathematical point of view, one could also have based the analysis on the underlying MPP. We name $(B(t))_{t \geq 0}$ the accumulated cash flow in valid time. Note the use of the symbol $B$ for two different objects, namely the stochastic process $t \mapsto B(t)$ and the deterministic function $t \mapsto B(x_A,t)$. It should always be clear from the context which object we are referring to. 

To allow for the valuation of cash flows, we need the time value of money. A detailed treatment of this financial constituent of the model may be found in~\citet{Norberg1990}. Let $t \mapsto \kappa(t)$ be some deterministic strictly positive càdlàg \textit{accumulation function} with initial value $\kappa(0)=1$. The corresponding \textit{discount function} is $t \mapsto \frac{1}{\kappa(t)}$. We let $x_A \mapsto B^\circ( x_A )$ be the time $0$ value of the accumulated payments for the path $x_A$, i.e.\ we set
\begin{align*} 
     B^\circ(x_A) = \int_{(0,\infty)} \frac{1}{\kappa(v)} B(x_A,\diff v),
\end{align*}
presupposing that this object exists (is finite).

\begin{remark} (Cash flow terminology.) \vspace{0.15cm} \\
In the life insurance literature, the stochastic process $(B(t))_{t \geq 0}$ defined above is sometimes also referred to as the payment function, the stream of net payments, the payment process, or the stochastic cash flow, see e.g.~\citet{Norberg1991} and~\citet{BuchardtMollerSchmidt2015}. We use the terminology that $B(t)$ is the accumulated cash flow at time~$t$ while the stochastic measure $B(\diff t)$ is the cash flow. The latter should not be confused with the expected cash flow $A_x(t,\diff s)$ defined by $A_x(t,s) = \mathbb{E}[B(s)-B(t) \mid X(t)=x], \: s \geq t$. In the literature, the expected cash flow is sometimes also ambiguously referred to as the cash flow.
\demormk
\end{remark}

\begin{example} (Cash flow for the usual choices of state processes.) \label{ex:MarkovSemiMarkovCashFlow} \vspace{0.15cm} \\ 
For a set $A$ on the form $[0,v)$, $[0,v]$, or $[0,\infty)$ for $v \geq 0$ and a pure jump Markov process $X$ on a finite state space $E=\{1,2,...,J\}$, see also Example~\ref{ex:PDMPMarkovChain}, one usually specifies the payments as
\begin{align*}
    B(x_A,\diff t) &= \sum_{j = 1 }^J 1_A(t) 1_{(x_{t-} = j )} B_j(\diff t) + \sum_{ \substack{j,k = 1 \\ j \neq k } }^J 1_A(t) b_{jk}(t) n_{jk}(x_A,\diff t),
\end{align*}
where $n_{jk}(x_A,t) = \# \{ s \in [0,t] \cap A : x_{s-}=j, x_s = k \}$, while $t \mapsto B_j(t)$ are càdlàg finite variation functions modeling sojourn payments and $t \mapsto b_{jk}(t)$ are finite-valued Borel-measurable functions modeling transition payments. In this case,
\begin{align*}
    B(\diff t) = \sum_{j = 1 }^J 1_{(X_{t-} = j )} B_j(\diff t) + \sum_{ \substack{j,k = 1 \\ j \neq k } }^J b_{jk}(t) N_{jk}(\diff t)
\end{align*}
for $N_{jk}(t) = n_{jk}(X_{[0,\infty)},t)$. In the semi-Markov case of Example~\ref{ex:PDMPSemiMarkov}, where $X=(Z,W)$ is a pure Markov jump process (with $W$ the time of the last jump), one usually specifies the payments as 
\begin{align*}
    B(x_A,\diff t) &= \sum_{j = 1 }^J 1_A(t) 1_{(z_{t-} = j )} B_{j,w_{t-}}(\diff t) + \sum_{ \substack{j,k = 1 \\ j \neq k } }^J 1_A(t) b_{(j,w_{t-})(k,t)}(t) n_{jk}(z_A,\diff t),
\end{align*}
where $x_s = (z_s,w_s)$, while $t \mapsto B_{j,w}(t)$ and $t \mapsto b_{(j,w)(k,t)}(t)$ satisfy the same regularity conditions as in the Markov case. Then
\begin{align*}
    B(\diff t) = \sum_{j = 1 }^J 1_{(Z_{t-} = j )} B_{j,W_{t-}}(\diff t) + \sum_{ \substack{j,k = 1 \\ j \neq k } }^J b_{(j,W_{t-})(k,t)}(t) N_{jk}(\diff t)
\end{align*}
for $N_{jk}(t) = n_{jk}(Z_{[0,\infty)},t)$.\demoex
\end{example}

\begin{example} (Continuous compound interest.) \label{ex:BankAccount} \vspace{0.15cm} \\
Under continuous compound interest with force of interest $t \mapsto r(t)$, a deposit of one unit currency in a savings account at time~$0$ has at time~$t$ accumulated to
\begin{align*}
\kappa(t) = \exp \bigg( \int_{(0,t]} r(s) \diff s \bigg),
\end{align*}
see e.g.\ \citet{Norberg1990}.\demoex
\end{example}

\begin{example} (Disability insurance with different origins: Valid time model.) \label{ex:SimpleValidTime} \vspace{0.15cm} \\
We construct a valid time model for the product described in Example~\ref{ex:BitemporalDataDisabilityType}. The state of the insured is modeled as a semi-Markov process $X=(Y,U)$, cf.\ Example~\ref{ex:PDMPSemiMarkov}, where $Y$ takes values in the state space depicted in Figure~\ref{fig:simpleValidModel}.
\begin{figure}[H]
	\centering
	\scalebox{0.75}{
	\begin{tikzpicture}[node distance=4em, auto]
	\node[punkt] (g) {active};;
	\node[punktl, right=of g] (i1) {nwd};
    \node[punktl, below=2em of i1] (i2) { wd };
    \node[punktl, right=of i1] (r) {reactivated};
    \node[right=-0.75em of i1] (dummyi1) {};
    \node[right=-0.2em of i2] (dummyi2) {};
	\node[punkt, below =2em of i2] (dead) {dead};
     \node[anchor=north east, at=(g.north east)] {$a$};
     \node[anchor=north east, at=(i1.north east)] {$i_1$};
     \node[anchor=north east, at=(i2.north east)] {$i_2$};
     \node[anchor=north east, at=(r.north east)] {$r$};
     \node[anchor=north east, at=(dead.north east)] {$d$};
	\path 
        ($(g.east)$) edge [pil, bend right = 15] ($(i1.west)$)
        ($(g.east)$) edge [pil, bend right = 15] ($(i2.west)$)
        ($(dummyi1.east)$) edge [pil, bend right = 15] ($(r.west)$)
        ($(dummyi2.east)$) edge [pil, bend right = 15] ($(r.west)$)
        ($(g.south east)$) edge [pil, shadecolor, bend right = 30] ($(dead.west)$)
        ($(r.south west)$) edge [pil, shadecolor, bend left = 30] ($(dead.east)$)
        ($(i2.south east)$) edge [pil, shadecolor, bend left = 30] ($(dead.north east)$)
        ($(i1.south west)$) edge [pil, shadecolor, bend right = 45] ($(dead.north west)$)
	; 
    \end{tikzpicture}
    }
	\caption{The $Y$-component of the biometric state process $X$ takes values in $\{a,i_1,i_2,r,d\}$. The absence of an arrow between states indicates that a direct jump between these states is impossible. To reduce clutter, the arrows into the dead state are made semi-transparent.}
	\label{fig:simpleValidModel}
\end{figure}
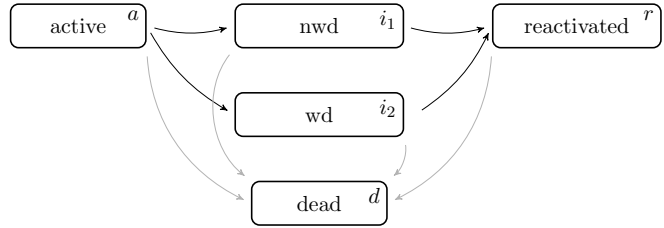
\noindent The cash flow is assumed to consist of a premium rate $\pi < 0$ while active and a disability rate $b_{i_k} > 0$ while disabled in state $i_k$:
\begin{align*}
    B(x_A,\diff t) &= 1_A(t) \pi 1_{(y_{t-}=a)} \diff t  + 1_A(t) \sum_{k=1}^2  b_{i_k} 1_{(y_{t-}=i_k)} \diff t,
\end{align*}
so that
\begin{align*}
    B(\diff t) = \pi 1_{(Y_{t-}=a)} \diff t  +  \sum_{k=1}^2  b_{i_k} 1_{(Y_{t-}=i_k)} \diff t.
\end{align*}
The deterministic valid time cash flow $B(x_A,\diff t)$ is used to specify the transaction time cash flow in Example~\ref{ex:SimpleTransactionTime}. Since there are no payments in the reactivated or dead states, the semi-Markov assumption is, for valuation purposes, actually not a restriction. \demoex
\end{example}

\section{Transaction time model} \label{sec:TTM}
We now introduce the transaction time models corresponding to the valid time models introduced in Section~\ref{sec:VTM}. The transaction time setup described below is novel, and it allows for cash flows that are tailored to describe the payments that occur in real-time, like the ones that would result from the cases described in Examples~\ref{ex:BitemporalDataIbnrRbns} and~\ref{ex:BitemporalDataDisabilityType}.

\subsubsection*{State process}

Let $(Z_t)_{t\geq0}$ be an $\mathbb{R}^q$-valued pure jump process for $q \in \mathbb{N}$. We think of $Z$ as describing the claim settlement process of the policy, which is observed in real-time by the insurer. The process $Z$ jumps to another state when new information is made available to the insurer, for example through communication with the insured or due to internal decisions from the insurer. The jump times are denoted by $T_n$. We write
\begin{align*}
\langle t \rangle = \sum_{n=1}^\infty 1_{(T_n \leq t)}
\end{align*}
for the number of jumps of $Z$ that have happened at time $t$. We further introduce a doubly indexed stochastic process $H^t_s$ for $0 \leq s \leq t$, which we name the \emph{transaction time MPP history corresponding to $H_s$}. The idea is to interpret $H^t_s$ as the value of $H_s$ based on the transaction time information available at time $t$. We extend the definition to $s > t$ by letting $H^t_s = H^t_t$, akin to how in Example~\ref{ex:BitemporalDataIbnrRbns}, the most recent history is set to be valid until $\infty$. The role of $Z$ is to contain transaction time information that may not yet have resulted in a change to the transaction time MPP. An example could be a reported claim occurrence that has yet to be awarded or rejected. We briefly note that the main example contained of this paper is sufficiently simple that one could have omitted $Z$ in the formulation and simply used $H^t_s$. This is, however, not possible in the general case. To imbue $H^t_s$ and $Z$ with the desired interpretations outlined above, we specify the dependencies to the corresponding valid time process:
\begin{enumerate}[label=(\roman*)]
    \item We only allow changes to the transaction time MPP history corresponding to $H_s$ to occur at jumps of $Z$, so the process $Z$ is the driver of new transaction time information arriving. This is formalized as 
\begin{align*}
H_s^t = H_s^{T_{\langle t \rangle }}
\end{align*}
for all $t,s \geq 0$.
    \item We assume that there is a finite time after which no new information arrives (e.g.\ the time of death of the insured). This is formalized as
\begin{align*}
Z_t = Z_\eta
\end{align*}
    for all $t \geq \eta$ with $\mathbb{P}(\eta < \infty) = 1$. This condition is satisfied in any practical application. We name $\eta$ the \textit{absorption time}.
    \item When there are no future changes to the transaction time MPP history corresponding to $H_s$, the observations are taken to be the true historical information. These observations constitute the finalized timeline that the insurer will observe. This is formalized as
\begin{align*}    
H_s^t = H_s
\end{align*}
for all $t \geq \eta$ and $s \geq 0$. This is the fundamental link between the valid time and transaction time models.
\end{enumerate}
We name these assumptions the \textit{basic bi-temporal structure assumptions}.

The transaction time state process is then defined as the doubly indexed stochastic process $X^t_s$ satisfying
\begin{align*}
X_s^t = f_{H^t_s}(s),
\end{align*}
with the interpretation that $X_s^t$ is the value of $X_s$ based on the available transaction time information at time~$t$.

At time~$t$, the insurer has observed $H^{s}_{s}$ and $Z_{s}$ for all $s \leq t$. The insurer's available information is therefore generated by a process $(\mathcal{Z}_t)_{t\geq0}$ given by
\begin{align*}
t \mapsto \mathcal{Z}_t = \big( Z_{t}, H^{t}_{t}  \big),
\end{align*}
and we define the \textit{transaction time information} to be the filtration $\mathcal{F}^\mathcal{Z}_t = \sigma( \mathcal{Z}_s , 0 \leq s \leq t )$. Note that $\mathcal{Z}$ is a PDP by Example~\ref{ex:PDPStateProcess}, since it is a pure jump process which furthermore can be embedded into the Borel space $(\mathbb{R}^\infty, \mathbb{B}(\mathbb{R}^\infty))$.  

\begin{remark} (Transaction and valid time filtrations.) \label{remark:TTandVTFiltration} \vspace{0.15cm} \\
Note that in general, it holds that $\mathcal{F}^{\mathcal{Z}}_t \not \subseteq \mathcal{F}^X_t$ and $\mathcal{F}^X_t \not \subseteq \mathcal{F}^{\mathcal{Z}}_t$. This corresponds to $(\mathcal{Z}_s)_{0\leq s \leq t}$ not being known from $(X_s)_{0\leq s \leq t}$ and vice versa. In other words, the same valid time realizations can stem from different transaction time realizations, and transaction time realizations in a period do not generally determine the valid time realizations in that same period, due to the possibility of new information arriving later.\demormk
\end{remark}

\subsubsection*{Cash flow}

We now define the transaction time cash flow that represents the payments occurring in real-time. We denote the accumulated cash flow in transaction time by $(\mathcal{B}(t))_{t \geq 0}$ and define it as
\begin{equation*}
    \begin{aligned}[c]
    &\mathcal{B}(\diff t) = B(X^{t}_{[0,\infty)}, \diff t) + \diff \bigg( \sum_{0 < s \leq t} \kappa(s)  \Big( B^\circ(X^s_{[0,s)} )-B^\circ( X^{s-}_{[0,s)}) \Big) \bigg), \hspace{0.5cm} \mathcal{B}(0)=B(0).
    \end{aligned}
\end{equation*}
The first term is well-defined as it may be written as $\sum_{n=0}^\infty 1_{(\langle t \rangle = n)} B(X^{T_n}_{[0,\infty)}, \diff t)$ with the convention $T_0=0$. Hence the cash flow in transaction time consists of running payments similar to the ones in the valid time model, but here determined by $X^t$ instead of $X$, as well as lump sum payments when some past $X$ values are changed based on the newest transaction time information. The lump sum payments are commonly known as \textit{backpay}. Backpay makes the accumulated historic payments congruent with the latest MPP history. Furthermore, and closely connected to the principle of no arbitrage, the backpay is accumulated to the time of payout according to $\kappa$, so that the insured is no better or worse off than if the payment had not been delayed and had been deposited in a savings account immediately after payout. Note also that the transaction time cash flow is incrementally adapted to $\mathcal{F}^\mathcal{Z}$, meaning that $\mathcal{Z}$ is the process that determines the transaction time payments at a given instance in time.

Note that since the first term of $\mathcal{B}$ is evaluated in $X^t$ and not $X^{t-}$, the payment at time~$t$ based on the most recent information is included in the first term and should not be included in the backpay, which is why the right endpoint is excluded in the interval $[0,s)$ that appears in the second term. Note also that $X^s_{[0,s)}=X^{s-}_{[0,s)}$ unless $Z$ jumps at time $s$. This implies that backpay can only be paid at jumps of $Z$.

\begin{remark} (Cash flow modeling in non-life insurance.) \label{remark:CashFlowNonLife} \vspace{0.15cm} \\
As described in Section~\ref{sec:VTTT}, using non-life insurance methods, one would model the expected future payments arising from the real-time cash flow $\mathcal{B}$ by disregarding the state process $X^t_s$ and the structure it imposes on the cash flow, and instead find suitable statistical models that predict $\mathcal{B}(\diff t)$ directly.\demormk
\end{remark}

\noindent In Example~\ref{ex:SimpleTransactionTime}, we consider the situation where it could be difficult to determine the origin or cause of a disability. This results in retroactive changes, which we can describe using a transaction time model.

\begin{example} (Disability insurance with different origins: Transaction time model.) \label{ex:SimpleTransactionTime} \vspace{0.15cm} \\
We assume a simple transaction time model, namely that the $Z$-process takes values in the state space illustrated in Figure~\ref{fig:simpleTransactionModel}.
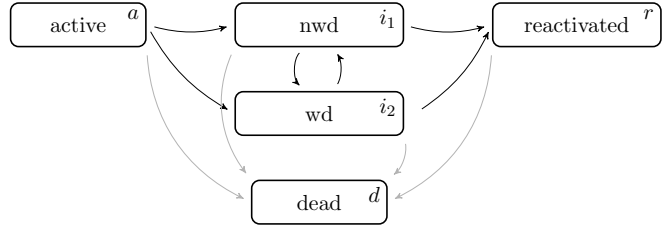
\begin{figure}[H]
	\centering
	\scalebox{0.75}{
	\begin{tikzpicture}[node distance=4em, auto]
	\node[punkt] (g) {active};;
	\node[punktl, right=of g] (i1) {nwd};
    \node[punktl, below=2em of i1] (i2) { wd };
    \node[punktl, right=of i1] (r) {reactivated};
    \node[right=-0.75em of i1] (dummyi1) {};
    \node[right=-0.2em of i2] (dummyi2) {};
	\node[punkt, below =2em of i2] (dead) {dead};
     \node[anchor=north east, at=(g.north east)] {$a$};
     \node[anchor=north east, at=(i1.north east)] {$i_1$};
     \node[anchor=north east, at=(i2.north east)] {$i_2$};
     \node[anchor=north east, at=(r.north east)] {$r$};
     \node[anchor=north east, at=(dead.north east)] {$d$};
	\path 
        ($(g.east)$) edge [pil, bend right = 15] ($(i1.west)$)
        ($(g.east)$) edge [pil, bend right = 15] ($(i2.west)$)
        ($(dummyi1.east)$) edge [pil, bend right = 15] ($(r.west)$)
        ($(dummyi2.east)$) edge [pil, bend right = 15] ($(r.west)$)
        ($(g.south east)$) edge [pil, shadecolor, bend right = 30] ($(dead.west)$)
        ($(r.south west)$) edge [pil, shadecolor, bend left = 30] ($(dead.east)$)
        ($(i2.south east)$) edge [pil, shadecolor, bend left = 30] ($(dead.north east)$)
        ($(i1.south west)$) edge [pil, shadecolor, bend right = 30] ($(dead.north west)$)
        ($(i2.north)+(0.25,0)$) edge [pil, bend right = 30] ($(i1.south)+(0.25,0)$)  
        ($(i1.south)+(-0.25,0)$) edge [pil, bend right = 45] ($(i2.north)+(-0.25,0)$)            
	; 
    \end{tikzpicture}
    }
	\caption{The $Z$-component of the transaction time process $\mathcal{Z}$ takes values in $\{a,i_1,i_2,r,d\}$. To reduce clutter, the arrows into the dead state are made semi-transparent.}
	\label{fig:simpleTransactionModel}
\end{figure}

\noindent We further assume that the transitions in the transaction time model equal those in the valid time model, but that the origin of disability $i_k$ is not necessarily correctly identified when the claim is reported. The presumed origin may change while the insured is still disabled, but not after the insured has reactivated or died.

Let $\theta_j =  \inf \{ s \geq 0 : Z_s = j \}$ be first hitting times, let $\theta_i = \min\{\theta_{i_1},\theta_{i_2}\}$ be the first hitting time of a disabled state, and write $i(t)$ for the most recently visited disabled state before time $t$. Define $x \wedge y = \min \{x,y\}$. By the specification of the model, $H^t_s$ contains $(\theta_d,d)$ on $(\theta_d \leq t \wedge s)$, it contains $(\theta_r,r)$ on $(\theta_r \leq t \wedge s)$, and it contains $(\theta_i,i_k)$ on $(\theta_i \leq t \wedge s, i(t) = i_k)$. Finally, let $N^Z_{jk}(t)$ be the number of transitions from state $j$ to state $k$ for the process $Z$ on $[0,t]$. The transaction time cash flow then reads
\begin{align*}
    \mathcal{B}(\diff t) &= \pi 1_{(Y^t_{t-}=a)} \diff t  +  \sum_{k=1}^2  b_{i_k} 1_{(Y^t_{t-}=i_k)} \diff t + \sum_{\substack{j,k \in \{1,2\} \\ j \neq k}} \bigg(\int_{[\theta_i,t)} \frac{\kappa(t)}{\kappa(s)}(b_{i_k} - b_{i_j} ) \diff s  \bigg) N^Z_{i_j i_k}(\diff t). \\[-7mm]\tag*{\demoex}
\end{align*}
\end{example}

\FloatBarrier

\section{Reserving} \label{sec:Reserving}

Having defined the state process and cash flow in the valid and transaction time models, we are now in a position to define the prospective present value and prospective reserve in the respective models. The main result of the paper, Theorem~\ref{theorem:PZequalsPYR}, linking the present values in the two models, is deduced and discussed. Finally, the dynamics of the transaction time reserve is derived and its role in model validation is briefly considered. 

\subsubsection*{Valid time reserve}
Let the classic valid time prospective present value $(P(t))_{t \geq 0}$ be defined by
\begin{align*}
P(t)= \int_{(t,\infty)} \frac{\kappa(t)}{\kappa(s)} B(\diff s).
\end{align*}
We assume that $P(t)$ is well-defined and belongs to $L^1(\Omega,\mathbb{F},\mathbb{P})$ for every $t \geq 0$. Then we can define the corresponding expected present value by
\begin{align}
    V(t) = \mathbb{E}\big[P(t) \mid \mathcal{F}^X_t \big]
\end{align}
for any $t\geq0$. As a function of $t$, this is also known as the \textit{prospective reserve} in valid time. We consider a version of $V$ presumed to be $\mathcal{F}^X$-adapted and $\mathbb{P}$-a.s.\ right-continuous.
\begin{remark} (Reserve for Markov state process.) \label{remark:ReserveMarkovSemiMarkov} \vspace{0.15cm} \\
Note that $P(t)$ is $\sigma(X_s , t \leq s < \infty)$-measurable, so if $X$ is a Markov-process, it follows that
\begin{align*}
V(t) = \mathbb{E}[P(t) \mid X_t ]. \tag*{\demormk}
\end{align*}
\end{remark}

\noindent For random variables $Y_1$ and $Y_2$, we say that $Y_1=Y_2$ on $A \in \mathbb{F}$ if $Y_1(\omega) = Y_2(\omega)$ for $\mathbb{P}$-almost all $\omega \in A$. We immediately have the following representation of the present value:
\begin{proposition} (Valid time present value.) \label{proposition:ValidTimePV} \vspace{0.15cm} \\
For $t \geq 0$, it holds that
\begin{align*}
P(t) = \kappa(t)\Big( B^\circ(X^\eta_{[0,\infty) })- B^\circ(X^\eta_{[0,t]}) \Big)
\end{align*}
on the event $(\eta < \infty)$.
\end{proposition}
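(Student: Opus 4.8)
The plan is to use the basic bi-temporal structure assumption~(iii) to identify the transaction time state process evaluated at the absorption time with the valid time state process on the event $(\eta<\infty)$, after which the identity reduces to a bookkeeping computation involving $B^\circ$ and the definition of $P(t)$. First I would fix $\omega$ in the event $(\eta<\infty)$ and apply assumption~(iii) with the (finite) time $\eta$ playing the role of $t$; this gives $H^\eta_s = H_s$ for every $s\geq0$, and hence, by the definition of the transaction time state process,
\begin{align*}
X^\eta_s = f_{H^\eta_s}(s) = f_{H_s}(s) = X_s \qquad \text{for every } s \geq 0 .
\end{align*}
Therefore $X^\eta_{[0,\infty)} = X_{[0,\infty)}$ and $X^\eta_{[0,t]} = X_{[0,t]}$ on $(\eta<\infty)$, so it suffices to establish the deterministic-path identity $P(t) = \kappa(t)\big(B^\circ(X_{[0,\infty)}) - B^\circ(X_{[0,t]})\big)$.

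Next I would make precise the interpretation of $B(x_A,\cdot)$ as ``the payments generated by $x$ on $A\cap[0,\cdot]$'': for $A=[0,t]$ the function $v\mapsto B(x_{[0,t]},v)$ agrees with $v\mapsto B(x_{[0,\infty)},v)$ on $[0,t]$ and is constant on $[t,\infty)$, so that, as measures,
\begin{align*}
B(x_{[0,t]},\diff v) = 1_{(0,t]}(v)\,B(x_{[0,\infty)},\diff v),
\end{align*}
the endpoint $t$ carrying the same atom on both sides and no mass being charged to $(t,\infty)$. I would also note that $B^\circ(X_{[0,t]})$ is finite, since $v\mapsto 1/\kappa(v)$ is càdlàg and hence bounded on the compact interval $[0,t]$ while $v\mapsto B(X_{[0,\infty)},v)$ has finite variation there, and that $B^\circ(X_{[0,\infty)}) = P(0)$ is finite by the standing integrability assumption on the valid time present value; this legitimizes splitting $\int_{(0,\infty)}$ into $\int_{(0,t]}$ and $\int_{(t,\infty)}$ below.

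Plugging the measure identity into the definition of $B^\circ$, with the shorthand $B(\diff v) = B(X_{[0,\infty)},\diff v)$, one gets
\begin{align*}
B^\circ(X_{[0,\infty)}) - B^\circ(X_{[0,t]}) = \int_{(0,\infty)}\frac{1}{\kappa(v)}B(\diff v) - \int_{(0,t]}\frac{1}{\kappa(v)}B(\diff v) = \int_{(t,\infty)}\frac{1}{\kappa(v)}B(\diff v),
\end{align*}
and multiplying by $\kappa(t)$ and comparing with the definition of the valid time present value yields $\kappa(t)\big(B^\circ(X_{[0,\infty)}) - B^\circ(X_{[0,t]})\big) = \int_{(t,\infty)}\tfrac{\kappa(t)}{\kappa(v)}B(\diff v) = P(t)$, which together with the first step gives the claim on $(\eta<\infty)$. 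The argument is essentially soft; the only point requiring genuine care is the translation of the informal phrase ``payments generated by the path $x$ on $A\cap[0,t]$'' into the measure identity $B(x_{[0,t]},\diff v) = 1_{(0,t]}(v)\,B(x_{[0,\infty)},\diff v)$ — in particular verifying that no spurious mass is created at or beyond the endpoint $t$ — with the finiteness remarks being needed only to justify the splitting of the integral.
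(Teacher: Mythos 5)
Your proposal is correct and follows essentially the same route as the paper's proof: the identification $X^\eta = X$ on $(\eta<\infty)$ via the basic bi-temporal structure assumption~(iii), combined with splitting the discounted payment integral $\int_{(0,\infty)} = \int_{(0,t]} + \int_{(t,\infty)}$ and matching the pieces with $B^\circ(X_{[0,\infty)})$, $B^\circ(X_{[0,t]})$, and $P(t)$. The only difference is the order of the two steps and your added care in spelling out the measure identity $B(x_{[0,t]},\diff v)=1_{(0,t]}(v)\,B(x_{[0,\infty)},\diff v)$, which the paper uses implicitly.
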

\begin{proof}
Using the definitions introduced above, we find that
\begin{align*}
    P(t) &= \kappa(t) \int_{(t,\infty)} \frac{1}{\kappa(s)} B(\diff s) \\[4pt]
    &= \kappa(t) \bigg( \int_{[0,\infty)} \frac{1}{\kappa(s)} B(\diff s) - \int_{[0,t]} \frac{1}{\kappa(s)} B(\diff s) \bigg) \\[4pt]
    &= \kappa(t) \bigg( \int_{[0,\infty)} \frac{1}{\kappa(s)} B(X_{[0,\infty)},\diff s) - \int_{[0,t]} \frac{1}{\kappa(s)} B(X_{[0,\infty)},\diff s) \bigg) \\[4pt]
    &= \kappa(t)\Big( B^\circ(X_{[0,\infty) })- B^\circ(X_{[0,t]}) \Big).
\end{align*}
Now on the event $(\eta < \infty)$ we have that $X_s = X^\eta_s$ for all $s \geq 0$, from which the result follows.
\end{proof}

\subsubsection*{Transaction time reserve}

The transaction time prospective present value $(\mathcal{P}(t))_{t \geq 0}$ is defined as
\begin{align*}
    \mathcal{P}(t) &= \int_{(t,\infty)} \frac{\kappa(t)}{\kappa(s)}  \mathcal{B}(\diff s).
\end{align*}
We assume $\mathcal{P}(t)$ is well-defined and belongs to $L^1(\Omega,\mathbb{F},\mathbb{P})$ for any $t \geq 0$. Let $\mathcal{V}$ be the corresponding expected present value in the transaction time model
\begin{align} \label{eq:VZ}
    \mathcal{V}(t)= \mathbb{E} \big[ \mathcal{P}(t) \mid \mathcal{F}^{\mathcal{Z}}_t \big].
\end{align}
Just as for $V$, we consider a version of $\mathcal{V}$ presumed to be $\mathcal{F}^{\mathcal{Z}}$-adapted and $\mathbb{P}$-a.s.\ right-continuous. Since the backpay $B^\circ(X^s_{[0,s)} )-B^\circ( X^{s-}_{[0,s)})$ telescopes, we also get the following representation of the transaction time present value $\mathcal{P}$: 
\begin{proposition} (Transaction time present value.) \label{proposition:TransactionTimePV} \vspace{0.15cm} \\
For $t \geq 0$, it holds that
\begin{align*}
\mathcal{P}(t)=\kappa(t)\Big( B^\circ(X^\eta_{ [0,\infty)})- B^\circ(X^t_{[0,t]}) \Big)
\end{align*}
on the event $(\eta < \infty)$.
\end{proposition}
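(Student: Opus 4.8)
The plan is to compute $\mathcal{P}(t)$ directly from the definition of the transaction time cash flow $\mathcal{B}$, splitting the integral over $(t,\infty)$ into the running-payment part and the backpay part, and then exploiting the telescoping of the backpay increments together with the basic bi-temporal structure assumptions to collapse everything on the event $(\eta<\infty)$.

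First I would write $\mathcal{P}(t)=\kappa(t)\int_{(t,\infty)}\tfrac{1}{\kappa(s)}\mathcal{B}(\diff s)$ and substitute the two terms defining $\mathcal{B}(\diff s)$. For the running-payment term $B(X^{s}_{[0,\infty)},\diff s)$: on $(\eta<\infty)$ we have $s>t\geq$ (eventually) $\eta$ only for large $s$, but more to the point, once $s\geq\eta$ the assumption $H^s_u=H_u$ for all $u$ gives $X^s_u=X_u$; however since the first term is evaluated at the $s$-th increment and we integrate over $(t,\infty)$, I would instead argue as follows. Combine the running term over $(t,\infty)$ with the part of the backpay sum accumulated over $(t,\infty)$; write the backpay term as $\diff\big(\sum_{0<u\le s}\kappa(u)(B^\circ(X^u_{[0,u)})-B^\circ(X^{u-}_{[0,u)}))\big)$, and note that integrating $\tfrac{1}{\kappa(s)}$ against this $\diff s$-measure over $(t,\infty)$ picks out exactly the jump contributions $\sum_{u>t}(B^\circ(X^u_{[0,u)})-B^\circ(X^{u-}_{[0,u)}))$, the $\kappa$ factors cancelling at each jump time $u$. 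This sum telescopes (as already observed in the paragraph preceding the proposition): on $(\eta<\infty)$ only finitely many jumps of $Z$ occur, after $\eta$ there is no backpay, and $\lim_{u\to\infty}X^u_{[0,u)}=X^\eta_{[0,\infty)}$ while the "initial" term is $X^t_{[0,t)}$-based.

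The cleanest route, which I would actually follow, is to compare with the already-proved Proposition~\ref{proposition:ValidTimePV}. By that proposition, on $(\eta<\infty)$, $P(t)=\kappa(t)(B^\circ(X^\eta_{[0,\infty)})-B^\circ(X^\eta_{[0,t]}))$, and since $X^\eta_s=X_s$ on $(\eta<\infty)$ the valid time cash flow over $(t,\infty)$ equals $\kappa(t)(B^\circ(X^\eta_{[0,\infty)})-B^\circ(X^\eta_{[0,t]}))$. For $\mathcal{P}(t)$, one shows that the total discounted transaction time payments over $(t,\infty)$ equal the total discounted valid time payments over $(t,\infty)$ \emph{plus} a correction coming from backpay that retroactively adjusts payments made on $[0,t]$: precisely, $\mathcal{P}(t)=P(t)+\kappa(t)(B^\circ(X^\eta_{[0,t]})-B^\circ(X^t_{[0,t]}))$, because all backpay paid strictly after $t$ that concerns the sub-path on $[0,t)$ must, by telescoping and by $X^u\to X^\eta$ as $u\to\infty$, sum to $B^\circ(X^\eta_{[0,t)})-B^\circ(X^t_{[0,t)})$, and the endpoint $\{t\}$ contributes nothing since $X^u_{\{t\}}=X^t_{\{t\}}$ for $u\ge t$ can differ, so care is needed there — I would handle the point mass at $t$ explicitly using that the first term of $\mathcal{B}$ is evaluated at $X^t$ (not $X^{t-}$), matching the excluded right endpoint in $[0,s)$. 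Substituting $P(t)=\kappa(t)(B^\circ(X^\eta_{[0,\infty)})-B^\circ(X^\eta_{[0,t]}))$ then yields $\mathcal{P}(t)=\kappa(t)(B^\circ(X^\eta_{[0,\infty)})-B^\circ(X^t_{[0,t]}))$, as claimed.

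The main obstacle I anticipate is bookkeeping at the boundary time $t$: one must be careful that backpay concerning payments at the exact instant $t$ is correctly attributed — i.e.\ that the "$[0,t]$" inside $B^\circ(X^t_{[0,t]})$ (closed interval) is the right object, given that $\mathcal{B}$'s running term uses $X^t$ while its backpay term uses the half-open $[0,s)$. Resolving this cleanly requires spelling out that, for $u>t$, the increment $B^\circ(X^u_{[0,u)})-B^\circ(X^{u-}_{[0,u)})$ restricted to the time-set $[0,t]$ telescopes to $B^\circ(X^\eta_{[0,t]})-B^\circ(X^{t}_{[0,t]})$ once one also absorbs the $\{t\}$-contribution from the running term at time $t$ itself. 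Everything else is the routine Fubini/telescoping manipulation already signposted in the text.
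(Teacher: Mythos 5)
Your first paragraph sketches essentially the paper's own argument: split $\int_{(t,\infty)}\kappa(s)^{-1}\mathcal{B}(\diff s)$ into the running-payment part and the backpay part, observe that the $\kappa(u)$ in the backpay term cancels against the discount factor at each jump time, and telescope. One correction there: the backpay increments $B^\circ(X^{u}_{[0,u)})-B^\circ(X^{u-}_{[0,u)})$ do \emph{not} telescope on their own, because the time-window $[0,u)$ changes from one jump to the next. The telescoping only materializes after you interleave each backpay term with the running-payment increment accrued on the preceding inter-jump interval, i.e.\ after writing $\int_{[T_n,T_{n+1})}\kappa(s)^{-1}B(X^{T_n}_{[0,\infty)},\diff s)=B^\circ(X^{T_n}_{[0,T_{n+1})})-B^\circ(X^{T_n}_{[0,T_n)})$ and pairing it with $\beta_{T_{n+1}}=B^\circc{}$\ldots --- this pairing is the actual content of the paper's proof, and the remark preceding the proposition that ``the backpay telescopes'' is only a loose signpost for it.

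The route you say you would actually follow --- establishing $\mathcal{P}(t)=P(t)+\kappa(t)\big(B^\circ(X^\eta_{[0,t]})-B^\circ(X^t_{[0,t]})\big)$ first and substituting Proposition~\ref{proposition:ValidTimePV} --- inverts the paper's logical order: that identity is Theorem~\ref{theorem:PZequalsPYR}, which the paper \emph{deduces from} the two propositions. Your ordering is legitimate in principle, but as written it has a genuine gap: the claim that the discounted transaction-time payments over $(t,\infty)$ equal the discounted valid-time payments over $(t,\infty)$ plus only the $[0,t]$-restricted backpay is asserted, not proved. To prove it you must (i) split each backpay increment $B^\circ(X^u_{[0,u)})-B^\circ(X^{u-}_{[0,u)})$ into its $[0,t]$-part and its $(t,u)$-part, which needs additivity of $B^\circ$ over a partition of the time axis (available via incremental adaptedness, but it must be invoked), and (ii) show that the $(t,u)$-parts cancel exactly against the discrepancy between the running payments $B(X^s_{[0,\infty)},\diff s)$ and the valid-time payments $B(X_{[0,\infty)},\diff s)$ on $(t,\infty)$. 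Step (ii) is precisely the interleaved telescoping of the direct route, so nothing is saved; you have relocated the hard step and then omitted it. The fix is simply to carry out your first-paragraph plan in full (as the paper does), or, if you keep your ordering, to supply (i) and (ii) explicitly. Your handling of the boundary at $t$ --- matching the closed interval $[0,t]$ against the half-open $[0,s)$ in the backpay and the evaluation of the running term at $X^t$ rather than $X^{t-}$ --- is the right concern and is resolved exactly as you indicate.

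(Typographical note: the expression $B^\circc{}$ above should read $B^\circ(X^{T_{n+1}}_{[0,T_{n+1})})-B^\circ(X^{T_n}_{[0,T_{n+1})})$.)
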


\begin{proof}
Note that the equality is trivially satisfied on $(t \geq \eta)$ by Proposition~\ref{proposition:ValidTimePV} and the observation that $P(t)=\mathcal{P}(t)$ on $(t \geq \eta)$ since $X^t = X^\eta$ implies that the running payments agree and there is no backpay after time $t$. Hence, what remains to be shown is that the equality also holds on $(t < \eta)$. We therefore let all the remaining calculations be on $(t < \eta)$. We re-index $T_k=T_{k+\langle t \rangle}$ for $k \in \mathbb{N}$, so $T_k$ now refers to the $k$'th jump of $Z$ after time $t$. Let ${n_\eta}$ be a random variable $\mathbb{P}$-a.s.\ taking values in $\mathbb{N}$, with the defining feature that $ T_{n_\eta} = \eta$, so that ${n_\eta}$ is the number of jumps after time $t$ until $Z$ is absorbed. For notational convenience, introduce
\begin{align*}
    \mathcal{P}^{\circ}(t) &= \frac{1}{\kappa(t)} \mathcal{P}(t)
\end{align*}
and $\beta_s = B^\circ(X^s_{[0,s)} )-B^\circ( X^{s-}_{[0,s)})$.
We then have on the event $(\eta < \infty)$:
\begin{align*}
       \mathcal{P}^{\circ}(t) 
       &= \int_{(t,\infty)} \frac{1}{\kappa(s)} \sum_{n=0}^\infty 1_{(\langle s \rangle = n)} B(X^{T_n}_{[0,\infty)}, \diff s) + \sum_{t < s < \infty} \beta_s \\[4pt]
       &=  \int_{(t,T_1)}  \frac{1}{\kappa(s)} B(X^t_{[0,\infty)}, \diff s) +  \beta_{T_1} \\
       & \quad + \sum_{n=1}^{n_\eta-1} \bigg( \int_{[T_n,T_{n+1})}  \frac{1}{\kappa(s)} B(X^{T_n}_{[0,\infty)}, \diff s) +  \beta_{T_{n+1}} \bigg) \\ 
       &  \quad + \int_{[\eta,\infty)} \frac{1}{\kappa(s)} B(X^\eta_{[0,\infty)}, \diff s)
\end{align*}   
by decomposing the integrals between jumps of $Z$ and using that $\beta$ is only non-zero at jumps of $Z$. Hence we can write
\begin{align*}       
       \mathcal{P}^{\circ}(t) &=  B^\circ(X^{t}_{[0,T_1)})-B^\circ(X^{t}_{[0,t]})  +  B^\circ(X^{T_1}_{[0,T_1)}) - B^\circ(X^{t}_{[0,T_1)} )  \\
       & \quad + \sum_{n=1}^{n_\eta-1}   \Big( B^\circ(X^{T_{n}}_{[0,T_{n+1})})-B^\circ(X^{T_{n}}_{[0,T_n)})  +   B^\circ(X^{T_{n+1}}_{[0,T_{n+1})}) - B^\circ(X^{T_{n}}_{[0,T_{n+1})} ) \Big)  \\
       & \quad +   B^\circ(X^{\eta}_{[0,\infty)}) - B^\circ(X^{\eta}_{[0,\eta)} ) \\[4pt]
       &= B^\circ(X^{T_1}_{[0,T_1)})-B^\circ(X^{t}_{[0,t]}) + \sum_{n=1}^{n_\eta-1}  \Big( B^\circ(X^{T_{n+1}}_{[0,T_{n+1})}) -B^\circ(X^{T_{n}}_{[0,T_n)})  \Big)  \\
       & \quad +   B^\circ(X^{\eta}_{[0,\infty)}) - B^\circ(X^{\eta}_{[0,\eta)} ).
\end{align*}   
Observe that the sum telescopes, so we have
\begin{align*}       
       \mathcal{P}^{\circ}(t) &=  B^\circ(X^{T_1}_{[0,T_1)})-B^\circ(X^{t}_{[0,t]}) + B^\circ(X^{\eta}_{[0,\eta)})-B^\circ(X^{T_1}_{[0,T_1)}) + B^\circ(X^{\eta}_{[0,\infty)}) - B^\circ(X^{\eta}_{[0,\eta)} )  \\
       &=   B^\circ(X^{\eta}_{[0,\infty)}) - B^\circ(X^t_{[0,t]}).
\end{align*}
Consequently, 
\begin{align*}
    \mathcal{P}(t) = \kappa(t) \mathcal{P}^{\circ}(t) = \kappa(t) \Big(  B^\circ(X^{\eta}_{[0,\infty)}) - B^\circ(X^t_{[0,t]})  \Big)
\end{align*}
as desired.
\end{proof}

\subsubsection*{Relation between reserves}

Using Proposition~\ref{proposition:ValidTimePV} and Proposition~\ref{proposition:TransactionTimePV}, the following theorem is now immediate:
\begin{theorem} (Representations of transaction time present value.) \label{theorem:PZequalsPYR} \vspace{0.15cm} \\
For $t \geq 0$, it holds that
\begin{align*}
     \mathcal{P}(t) &=  P(t)+\kappa(t) \Big( B^\circ(X^\eta_{[0,t]}) - B^\circ(X^t_{[0,t]}) \Big) \\[4pt]
     &=   P(t)+ \sum_{t < s < \infty} \kappa(t) \Big( B^\circ(X^s_{[0,t]} )-B^\circ( X^{s-}_{[0,t]}) \Big)  
\end{align*}
on the event $(\eta < \infty)$.
\end{theorem}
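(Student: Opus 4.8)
The statement is flagged as ``immediate'' from the two preceding propositions, so the proof should be a short manipulation. The plan is to work on the event $(\eta < \infty)$ throughout, invoke Proposition~\ref{proposition:TransactionTimePV} to rewrite $\mathcal{P}(t)$, and then match the result to $P(t)$ via Proposition~\ref{proposition:ValidTimePV}.

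First I would start from
\begin{align*}
\mathcal{P}(t) = \kappa(t)\Big( B^\circ(X^\eta_{[0,\infty)}) - B^\circ(X^t_{[0,t]}) \Big)
\end{align*}
on $(\eta < \infty)$, by Proposition~\ref{proposition:TransactionTimePV}. The idea is to introduce the term $B^\circ(X^\eta_{[0,t]})$ by adding and subtracting it:
\begin{align*}
\mathcal{P}(t) = \kappa(t)\Big( B^\circ(X^\eta_{[0,\infty)}) - B^\circ(X^\eta_{[0,t]}) \Big) + \kappa(t)\Big( B^\circ(X^\eta_{[0,t]}) - B^\circ(X^t_{[0,t]}) \Big).
\end{align*}
By Proposition~\ref{proposition:ValidTimePV}, the first bracketed group equals $P(t)$ on $(\eta < \infty)$, which yields the first displayed equality of the theorem.

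For the second equality, I would rewrite the correction term $\kappa(t)\big( B^\circ(X^\eta_{[0,t]}) - B^\circ(X^t_{[0,t]}) \big)$ as a telescoping sum over the jumps of $Z$ in $(t,\eta]$, exactly parallel to the telescoping argument in the proof of Proposition~\ref{proposition:TransactionTimePV} but with the fixed index set $[0,t]$ in place of the moving index sets. Concretely, using the basic bi-temporal structure assumptions, $H^s_{[0,t]}$ (and hence $X^s_{[0,t]}$) is constant in $s$ except at jumps of $Z$, and $X^s_{[0,t]} = X^\eta_{[0,t]}$ for $s \geq \eta$ while $X^s_{[0,t]} = X^t_{[0,t]}$ for $s \le t$; therefore
\begin{align*}
B^\circ(X^\eta_{[0,t]}) - B^\circ(X^t_{[0,t]}) = \sum_{t < s < \infty} \Big( B^\circ(X^s_{[0,t]}) - B^\circ(X^{s-}_{[0,t]}) \Big),
\end{align*}
the sum being effectively finite on $(\eta < \infty)$ since only finitely many jumps of $Z$ occur by time $\eta$ and there are none afterward. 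Multiplying by $\kappa(t)$ and combining with the first equality gives the second representation.

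\textbf{Main obstacle.} The only genuinely delicate point is justifying that the sum $\sum_{t<s<\infty}\big(B^\circ(X^s_{[0,t]})-B^\circ(X^{s-}_{[0,t]})\big)$ is well-defined and telescopes to the stated difference; this requires knowing that $s\mapsto X^s_{[0,t]}$ changes only at the (finitely many, on $(\eta<\infty)$) jump times of $Z$ and is eventually constant equal to $X^\eta_{[0,t]}$, both of which follow directly from assumptions (i)--(iii) on the transaction time MPP history together with $\mathbb{P}(\eta<\infty)$ being the operative event. Everything else is bookkeeping, so I expect the proof to occupy only a few lines.
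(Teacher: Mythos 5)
Your proof is correct and follows essentially the same route as the paper's: the first equality is obtained by combining the two representations from Propositions~\ref{proposition:ValidTimePV} and~\ref{proposition:TransactionTimePV} (your add-and-subtract of $B^\circ(X^\eta_{[0,t]})$ is the same algebra), and the second equality is the same telescoping over the finitely many jumps of $Z$ in $(t,\eta]$, anchored by the fact that $s \mapsto X^s_{[0,t]}$ only changes at jumps of $Z$ and is constant after $\eta$. (Your side remark that $X^s_{[0,t]} = X^t_{[0,t]}$ for $s \le t$ is not quite accurate — the needed and correct fact is that it holds for $s$ between $t$ and the first jump of $Z$ after $t$ — but this does not affect the argument.)
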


\begin{proof}
From Propositions~\ref{proposition:ValidTimePV} and~\ref{proposition:TransactionTimePV}, it follows that on $(\eta < \infty)$:
\begin{align*}
    \mathcal{P}(t)-P(t) &= \kappa(t) \Big( B^\circ(X^\eta_{ [0,t]})- B^\circ(X^t_{[0,t]})  \Big),
\end{align*}
which implies
\begin{align*}
    \mathcal{P}(t) &= P(t) + \kappa(t) \Big(  B^\circ(X^\eta_{ [0,t]})- B^\circ(X^t_{[0,t]})   \Big).
\end{align*}
This proves the first equality. The second equality corresponds to showing that
\begin{align*}
    B^\circ(X^\eta_{[0,t]})-B^\circ(X^t_{[0,t]})= \sum_{t < s < \infty} \Big( B^\circ(X^s_{[0,t]} )-B^\circ( X^{s-}_{[0,t]}) \Big). 
\end{align*}
This is trivially satisfied on $(t \geq \eta)$ since both the left- and right-hand side are zero. Using the same notation as the proof of Proposition~\ref{proposition:TransactionTimePV} and the convention $T_0=t$, we see on $(t < \eta)$:
\begin{align*}
   \sum_{t < s < \infty} \Big( B^\circ(X^s_{[0,t]} )-B^\circ( X^{s-}_{[0,t]}) \Big)
    &= \sum_{n=1}^{n_\eta} \Big(B^\circ(X^{T_n}_{[0,t]})-B^\circ(X^{T_n-}_{[0,t]}) \Big)\\[4pt]
    &= \sum_{n=1}^{n_\eta} \Big(B^\circ(X^{T_n}_{[0,t] })-B^\circ(X^{T_{n-1}}_{[0,t]}) \Big)\\[4pt]
    &= B^\circ(X^\eta_{[0,t]})-B^\circ(X^t_{[0,t]})
\end{align*}
since the sum telescopes. This establishes the second equality and thus completes the proof.
\end{proof}

\noindent The assertion of Theorem~\ref{theorem:PZequalsPYR} is quite intuitive, and one could alternatively have formulated the setup by taking $\mathcal{P}(t) =  P(t)+\kappa(t) ( B^\circ(X^\eta_{[0,t]}) - B^\circ(X^t_{[0,t]}))$ as a definition and proceeded from there. Defining the transaction time present value through the cash flow as $\mathcal{P}(t) = \int_{(t,\infty)} \frac{\kappa(t)}{\kappa(s)}  \mathcal{B}(\diff s)$ seems, however, the more principled approach. Furthermore, the cash flow representation is easier to work with in some situations; confer also with the proof of Theorem~\ref{thm:Dynamics}. Regardless, the representation in Theorem~\ref{theorem:PZequalsPYR} tends to be more convenient when linking the transaction and valid time reserves as can be seen, for instance, in Example~\ref{ex:SimpleReserving}.

\begin{remark} (Stylized illustration of Theorem~\ref{theorem:PZequalsPYR}.) \vspace{0.15cm} \\
To better explain the contents of Theorem~\ref{theorem:PZequalsPYR}, we give a stylized example. Suppose that one is situated at time $t$, that benefits have been paid in $[0,t-1]$, and that the finalized payments consist of benefits during all of $[0, t+1]$. Suppose further that the benefits concerning $(t-1,t+1]$ occur as backpay at time $t+1$. Then the payments concerning $(t,t+1]$ appear in $P(t)$, while the payments concerning $(t-1,t]$ appear in $\kappa(t) \big( B^\circ(X^\eta_{[0,t]}) - B^\circ(X^t_{[0,t]}) \big) =  \kappa(t) \big( B^\circ(X^{t+1}_{[0,t]} )-B^\circ( X^{t+1-}_{[0,t]}) \big)$. \demormk
\end{remark}

\begin{remark} (Relation between reserves in valid and transaction time.) \label{remark:PresentValueValidAndTransactionTime} \vspace{0.15cm} \\
By taking the conditional expectation given $\mathcal{F}^\mathcal{Z}_t$ of the expression from Theorem~\ref{theorem:PZequalsPYR}, we conclude that the expected present value in transaction time $\mathcal{V}$ is different from the classic expected present value in valid time $V$ in two fundamental ways:
\begin{enumerate}
    \item It reserves additionally to previously wrongly settled payments, so it is no longer strictly prospective in valid time, in the sense that payments may relate to valid time events that lie before the current point in time.
    \item It conditions on the filtration $\mathcal{F}^{\mathcal{Z}}$, which is observable, compared to $\mathcal{F}^X$, which is only partially observable.
\end{enumerate}
 If there had been no previously wrongly settled payments, and if the conditional expectations given the two filtrations had been equal, then the reserves would also have been identical. Even though the relation between the present values is relatively simple, this does not translate into a simple relation between $\mathcal{V}$ and $V$ in the general case. This is because we have so far imposed very little structure on the model for $\mathcal{Z}$, so how the conditional distribution of $\mathcal{Z}_s$ given $\mathcal{F}^\mathcal{Z}_t$ for $s \geq t$ depends on $\mathcal{F}^\mathcal{Z}_t$ can be almost arbitrarily complicated.

Note that another important consequence of Theorem~\ref{theorem:PZequalsPYR} (or rather Proposition~\ref{proposition:TransactionTimePV}) is that one does not need the distribution of $\mathcal{Z}$ to calculate $\mathcal{V}(t)$. The conditional distribution of $X_{[0,\infty)}$ given $\mathcal{F}^\mathcal{Z}_t$ is sufficient, since
\begin{align*}
     \mathcal{V}(t) =  \mathbb{E}\big[ \kappa(t) B^\circ(X_{ [0,\infty)}) \mid \mathcal{F}^\mathcal{Z}_t  \big] - \kappa(t) B^\circ(X^t_{[0,t]}).
\end{align*}
Further, since $X_{[0,\infty)}$ equals $X^\eta_{[0,\infty)}$ almost surely with respect to $\mathbb{P}$ and $X^\eta_{[0,\infty)}$ is $\mathcal{F}^\mathcal{Z}_\infty$-measurable, the  distribution of $\mathcal{Z}$ determines the conditional distribution of $X_{[0,\infty)}$ given $\mathcal{F}^\mathcal{Z}_t$ for any outcome of $\mathcal{Z}_{[0,t]}$. Consequently, the conditional distribution might be the natural modeling object. \demormk
\end{remark}

\begin{remark} (Non-monotone information.) \vspace{0.15cm} \\
Write $V(t) = g(X_{[0,t]})$ for a measurable function $g$, which exists by the Doob-Dynkin lemma. Continuing the discussion from Remark~\ref{remark:PresentValueValidAndTransactionTime}, standard practice seems to be to use the individual reserve $V^t(t)=g(X^t_{[0,t]})$ at time $t$ and use IBNR and RBNS factors on an aggregate level to correct for the fact that typically $X \neq X^t$. Note that the information that one uses for reserving is then non-monotone, since for $0\leq s \leq t$, it holds that $X^s_{[0,s]}$ is generally unknown from $X^t_{[0,t]}$ and vice versa. Reserves in the presence of non-monotone information have been studied in~\citet{ChristiansenFurrer2021}. In this, stochastic Thiele differential equations for prospective reserves are derived subject to information deletions, i.e.\ non-monotone information. These might be useful for studying the properties of the reserves $V^t$ currently used in practice.
\demormk
\end{remark}

\begin{example} (Disability insurance with different origins: Reserving.) \label{ex:SimpleReserving} \vspace{0.15cm} \\
We here describe reserving in the transaction time model from Example~\ref{ex:SimpleTransactionTime} and find explicit expressions for $\mathcal{V}$. To obtain intuitive formulas, we impose additional structure on the model for $\mathcal{Z}$ in the form of a conditional independence assumption.

Write $V_j(t,u) = \mathbb{E}[P(t) \mid X_t = (j,u) ]$ for the state-wise valid time reserves in the semi-Markov setup. These may be calculated using known methods, see e.g.~\citet{Christiansen2012} and~\citet{BuchardtMollerSchmidt2015}. Note that on $(Z_t =a)$, we have $\mathcal{F}^X_t = \mathcal{F}^\mathcal{Z}_t$ and $\mathcal{P}(t) = P(t)$ and thus $\mathcal{V}(t)= V_a(t,t)$. On $(Z_t \in \{r,d\})$, we have $\mathcal{P}(t)=P(t)=0$ and thus $\mathcal{V}(t)=0$. Hence only the case $(Z_t \in \{i_1,i_2\} )$ corresponding to an RBNS claim requires consideration. Using Theorem~\ref{theorem:PZequalsPYR}, we get
\begin{align*}
    \mathcal{V}(t) &= \mathbb{E} \bigg[P(t) + \kappa(t) \int_{[\theta_i,t]} \frac{1}{\kappa(s)} (b_{Y_s} - b_{Z_t}) \diff s \, \bigg| \, \mathcal{F}^\mathcal{Z}_t \bigg] \\
    &=
    \frac{\kappa(t)}{\kappa(\theta_i)} \mathbb{E} \big[P(\theta_i) \mid \mathcal{F}^\mathcal{Z}_t \big] - \int_{(\theta_i,t]} \frac{\kappa(t)}{\kappa(s)} b_{Z_t} \diff s.
\end{align*}
Assume that
\begin{align*}
X_{[0,\infty)} \indep \mathcal{F}^\mathcal{Z}_t \mid \mathcal{F}^X_t.
\end{align*}
In other words, if one had known the true value of the biometric state process, additional transaction time information is superfluous. Then by the law of iterated expectations and the semi-Markov property,
\begin{align*}
    \mathbb{E} \big[P(\theta_i) \mid \mathcal{F}^\mathcal{Z}_t \big] = \sum_{k=1}^2 \mathbb{P}(Y_{\theta_i} = i_k \mid \mathcal{F}^\mathcal{Z}_t) \bigg( \frac{\kappa(\theta_i)}{\kappa(t)} V_{i_k}(t,t-\theta_i) + \int_{(\theta_i,t]} \frac{\kappa(\theta_i)}{\kappa(s)} b_{i_k} \diff s \bigg).
\end{align*}
To conclude, on $(Z_t \in \{i_1,i_2\} )$ it holds that
\begin{align} \label{eq:DisabReserveExample}
    \mathcal{V}(t) = \sum_{k=1}^2 \mathbb{P}(Y_{\theta_i} = i_k \mid \mathcal{F}^\mathcal{Z}_t) \bigg( \ V_{i_k}(t,t-\theta_i) + \int_{(\theta_i,t]} \frac{\kappa(t)}{\kappa(s)} (b_{i_k}-b_{Z_t}) \diff s \bigg),
\end{align}
which is an explicit expression for the RBNS reserve. The probabilities $\mathbb{P}(Y_{\theta_i} = i_k \mid \mathcal{F}^\mathcal{Z}_t)$ for $k\in\{1,2\}$ may be calculated as absorption probabilities by extending the state space of the transaction time process to include separate reactivated and dead states for each of the disability origins.

As noted in Remark~\ref{remark:PresentValueValidAndTransactionTime}, the transaction time reserves differ from the valid time reserves through both the present values and the conditioning information. If the payment rates in the two disabled states, that is $b_{i_1}$ and $b_{i_2}$, are equal, then the present values are equal, i.e.\ $\mathcal{P}(t)=P(t)$, and we obtain on $(Z_t \in \{i_1,i_2\} )$ that
\begin{align*}
    \mathcal{V}(t) = \sum_{k=1}^2 \mathbb{P}(Y_{\theta_i} = i_k \mid \mathcal{F}^\mathcal{Z}_t) V_{i_k}(t,t-\theta_i).
\end{align*}
 If we additionally assume that the transition rates from states $i_1$ and $i_2$ are equal, then the difference due to differing conditioning information also disappears, and we get
\begin{align*}
    \mathcal{V}(t) = V_{i_1}(t,t-\theta_i) = V(t),
\end{align*}
meaning that the valid and transaction time reserves agree. 

Note that it is easy to extend the model to include $n$ disabled states $i_1,\ldots,i_n$. We may also extend the example to allow for transition between the disabled states in the valid time model. The disabled states could then represent more diverse and complex phenomena such as the degree of lost earning capacity or diagnoses. Such a model is depicted in Figure~\ref{fig:ExtendedValidTimeModel}. 
\begin{figure}[H]
	\centering
	\scalebox{0.75}{
	      \begin{tikzpicture}[node distance=4em, auto]
	\node[punkt] (g) {active};;
	\node[punktl, right=of g] (i1) {disabled $1$};
    \node[below=1em of i1] (i2) { $\vdots$ };
    \node[punktl, below=1.25em of i2] (in) { disabled $n$ };
    \node[punktl, right=of i1] (r) {reactivated};
	\node[punkt, below =3.5em of in] (dead) {dead}; 
    \node[below =0.5em of in] (dummysouth) {}; 
     \node[right=0.5em of i1] (dummyeast) {};
     \node[left=0.5em of i1] (dummywest) {}; 
     \node[anchor=north east, at=(g.north east)] {$a$};
     \node[anchor=north east, at=(i1.north east)] {$i_1$};
     \node[anchor=north east, at=(in.north east)] {$i_n$};
     \node[anchor=north east, at=(r.north east)] {$r$};
     \node[anchor=north east, at=(dead.north east)] {$d$};
     \node[right=0pt] at ($(i1.north east)+(0.05,0.2)$) {$i$};
	\path 
		($(g.south east)$) edge [pil, bend right = 40] ($(dead.west)$)
        ($(g.east)$) edge [pildotted, thick] ($(dummywest.west)$)
        ($(dummyeast.east)$) edge [pildotted, thick] ($(r.west)$)
        ($(r.south west)$) edge [pil, bend left = 40] ($(dead.east)$)
        ($(dummysouth.south)$) edge [pildotted, thick] ($(dead.north)$)
        ($(i1.south east)$) edge [pil, bend left = 13] ($(in.north east)$)
        ($(in.north west)$) edge [pil, bend left = 13] ($(i1.south west)$)
        ($(i1.south)$) edge [pil, bend left = 25] ($(i2)$)
        ($(i2)$) edge [pil, bend left = 25] ($(i1.south)$)
        ($(i2)+(0,-0.15)$) edge [pil, bend left = 25] ($(in.north)$)
        ($(in.north)$) edge [pil, bend left = 25] ($(i2)+(0,-0.15)$)
	; 
    \draw[thick, dotted] ($(i1.north west)+(-0.5,0.5)$) rectangle ($(in.south east)+(0.5,-0.5)$);
 
    \end{tikzpicture}
    }
    \caption{Valid time model from Figure~\ref{fig:simpleValidModel} extended to $n$ disabled states and allowing for transition between the disabled states. To reduce clutter, transitions to and from the disabled states $\{i_1,\ldots,i_n\}$ are represented with a single dotted arrow. } 
    \label{fig:ExtendedValidTimeModel}
\end{figure}
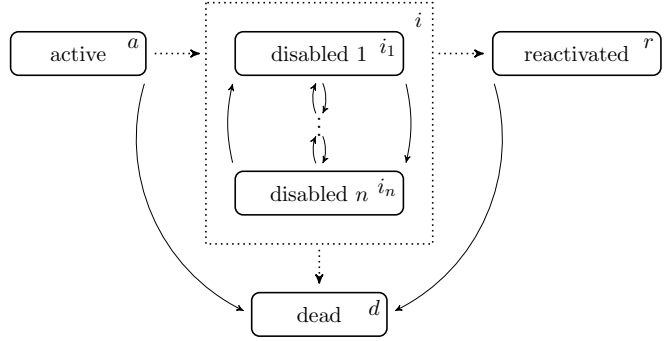
\noindent Allowing for transition between the disabled states comes at a cost in the form of increased complexity of the transaction time model, since the latter needs to be able to generate the valid time process. One option is to let the $Z$-component take values in the state space depicted in Figure~\ref{fig:ComplexTransactionTimeModel}.
\begin{figure}[H]
	\centering
	\scalebox{0.75}{
	   \begin{tikzpicture}[node distance=4em, auto]
	\node[punkt] (g) {active};;
	\node[punktll, right=of g] (i1) {event type $1$};
    \node[below=1em of i1] (i2) { $\vdots$ };
    \node[punktll, below=1.25em of i2] (in) { event type $m$ };
    \node[punktl, right=of i1] (r) {reactivated};
	\node[punkt, below =3.5em of in] (dead) {dead}; 
    \node[below =0.5em of in] (dummysouth) {}; 
     \node[right=0.5em of i1] (dummyeast) {};
     \node[left=0.5em of i1] (dummywest) {}; 
     \node[anchor=north east, at=(g.north east)] {$a$};
     \node[anchor=north east, at=(i1.north east)] {$e_1$};
     \node[anchor=north east, at=(in.north east)] {$e_m$};
     \node[anchor=north east, at=(r.north east)] {$r$};
     \node[anchor=north east, at=(dead.north east)] {$d$};
      \node[right=0pt] at ($(i1.north east)+(0.05,0.2)$) {$e$};
	\path 
		($(g.south east)$) edge [pil, bend right = 40] ($(dead.west)$)
        ($(g.east)$) edge [pildotted, thick] ($(dummywest.west)$)
        ($(dummyeast.east)$) edge [pildotted, thick] ($(r.west)$)
        ($(r.south west)$) edge [pil, bend left = 40] ($(dead.east)$)
        ($(dummysouth.south)$) edge [pildotted, thick] ($(dead.north)$)
        ($(i1.south east)$) edge [pil, bend left = 13] ($(in.north east)$)
        ($(in.north west)$) edge [pil, bend left = 13] ($(i1.south west)$)
        ($(i1.south)$) edge [pil, bend left = 25] ($(i2)$)
        ($(i2)$) edge [pil, bend left = 25] ($(i1.south)$)
        ($(i2)+(0,-0.15)$) edge [pil, bend left = 25] ($(in.north)$)
        ($(in.north)$) edge [pil, bend left = 25] ($(i2)+(0,-0.15)$)
	; 
    \draw[thick, dotted] ($(i1.north west)+(-0.5,0.5)$) rectangle ($(in.south east)+(0.5,-0.5)$);
 
    \end{tikzpicture}
    }
	\caption{The $Z$-component of a transaction time process $\mathcal{Z}$ that can generate the extended valid time model of Figure~\ref{fig:ExtendedValidTimeModel}. To reduce clutter, transitions to and from the RBNS states $\{e_1,\ldots,e_m\}$ are represented with a single dotted arrow.}
	\label{fig:ComplexTransactionTimeModel}
\end{figure}
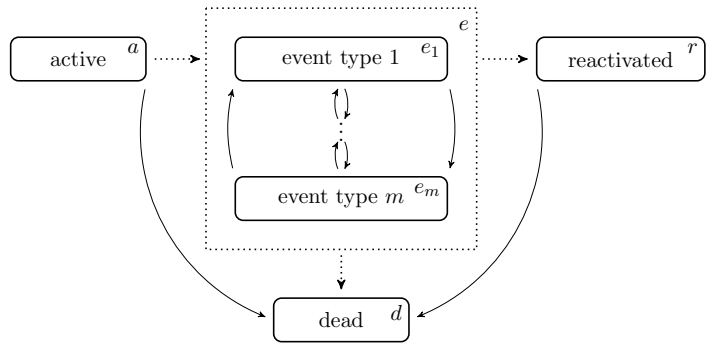
\noindent As noted in Remark~\ref{remark:PresentValueValidAndTransactionTime}, we do not need to specify the full distribution of changes in $H^t_s$ at jumps of $Z$ between states $e_1,\ldots,e_m$. This distribution only affects the reserves through the distribution of the valid time process conditional on the observed information, and it therefore suffices to model this. On $(Z_t \in \{e_1,\ldots,e_m\} )$, corresponding to an RBNS claim, one may show that
\begin{align*}
    \mathcal{V}(t) =\sum_{k=1}^n \bigg(\int_{[0,t]} V_{i_k}(t,u) \mathbb{P}(Y_t = i_k , U_t \in \diff u \mid \mathcal{F}^\mathcal{Z}_t) + \int_{(\theta_i,t]} \frac{\kappa(t)}{\kappa(s)} \mathbb{P}(Y_s = i_k \mid \mathcal{F}^\mathcal{Z}_t) (b_{i_k} - b_{Y^t_s} ) \diff s \bigg).
\end{align*}
This identity is comparable to~\eqref{eq:DisabReserveExample}, but it is more complex and requires one to model the path the insured takes through the disabled states between time $\theta_i$ and time $t$. This may be seen from the dependence on $U_t$ in the first term and the dependence on $Y_s$ for $s \in (\theta_i,t]$ in the second term.   

Finally, it is easy to extend the example to allow for general semi-Markov payments of the form described in Example~\ref{ex:MarkovSemiMarkovCashFlow}. This includes risk periods, waiting periods, and transition payments.\demoex
\end{example}

\noindent The above example serves as a simple theoretical demonstration of the potential of our general framework. As there is no reporting delay, there is no IBNR reserve in Example~\ref{ex:SimpleReserving}. The RBNS reserve of the example also only differs from the valid time disability reserve due to the imperfect observation of the disability type. To capture the full picture of IBNR and RBNS reserving, one would need to explore more intricate transaction time models with both reporting delays and claim adjudications. While this extension is outside the scope of this paper, our general framework readily allows for such continued studies.  We stress that such applications are the main motivation for introducing the transaction time framework. In addition to developing model extensions, it could also be relevant to develop estimation procedures for the above example as well as for more complicated models.

\subsubsection*{Reserve dynamics}

The study of reserve dynamics is of great importance, especially in relation to model validation, Cantelli's theorem and reserve-dependent payments, Hattendorff's theorem on non-correlation between losses, and the emergence and decomposition of surplus as well as sensitivity analyses, cf.~Section~1 in~\citet{ChristiansenFurrer2021}. We conclude this section by deriving the dynamics of the transaction time reserve $\mathcal{V}$ and valid time reserve $V$ following the same procedure as for the classic reserve, see e.g.~\citet{Christiansen2020}. Essentially, this amounts to applying an explicit martingale representation theorem to $\mathcal{V}$; the idea of applying martingale representation techniques dates back to~\citet{Norberg1992}. The dynamics of the prospective reserves bears a resemblance to Thiele’s differential equation; one might even say it constitutes a stochastic version of Thiele's differential equation. The literature, however, seems to reserve the term \textit{stochastic Thiele equation} for the stochastic differential equation related to the so-called state-wise prospective reserves, see e.g.~\citet{ChristiansenFurrer2021}.

Recall the definitions $\mathcal{V}(t) = \mathbb{E}\big[\mathcal{P}(t) \mid \mathcal{F}^\mathcal{Z}_t\big]$ and $\mathcal{Z}_t = (Z_t,H^t_t)$. Define a random counting measure $\mu_{\mathcal{Z}}$ corresponding to $\mathcal{Z}$ by 
\begin{align*}
    \mu_{\mathcal{Z}}(C) &= \sum_{n=1}^\infty 1_C(T_n,\mathcal{Z}_{T_n}), \hspace{0.5cm} C \in \mathbb{B}( [0,\infty) )  \otimes \mathbb{B}(\mathbb{R}^\infty),
\end{align*}
and let $\Lambda_{\mathcal{Z}}$ be its \textit{compensating measure}, given in Definition 4.3.2 (iii) of~\citet{Jacobsen2006}. By Theorem 4.5.2 of~\citet{Jacobsen2006}, if $\mathbb{E}[\mu_{\mathcal{Z}}([0,t] \times D) ] < \infty$ for all $t \geq 0$ and $D \in \mathbb{B}(\mathbb{R}^\infty)$, we have that 
\begin{align*}
t \mapsto \mu_{\mathcal{Z}}([0,t] \times D) - \Lambda_{\mathcal{Z}}([0,t] \times D)
\end{align*}
is a martingale for any $D \in \mathbb{B}(\mathbb{R}^\infty)$. Let $\xi_n = (T_1,...,T_n ; \mathcal{Z}_{T_1},...,\mathcal{Z}_{T_n} )$ be the MPP history of $\mathcal{Z}$ at time $T_n$.

Write $\zeta=(\zeta_z,\zeta_h)$ for a generic realization of $\mathcal{Z}_t$, where the coordinates $\zeta_z$ and $\zeta_h$ pertain to $Z_{t}$ and $H^{t}_{t}$, respectively. Finally, define the sums at risk in the transaction time model for a jump of $\mathcal{Z}$ to $\zeta$ at time $t$:
\begin{align*}
    \mathcal{R}(t,\zeta) &= \sum_{n=1}^\infty 1_{(T_n < t \leq T_{n+1})} \Big( \kappa(t)\Big(B^\circ( (f_{ \zeta_h }(s))_{0 \leq s \leq t} ) -B^\circ( X^{t-}_{[0,t]} ) \Big) \\
    & \qquad \qquad \qquad \qquad \qquad +\mathbb{E}[ \mathcal{P}(t) \mid \xi_n, (T_{n+1},\mathcal{Z}_{T_{n+1}})=(t,\zeta)] -\mathbb{E}[\mathcal{P}(t) \mid \xi_n, T_{n+1} > t] \Big).
\end{align*}
This is a difference in payments and reserves at time~$t$ between a jump-to-$\zeta$ and a remain-in-$\mathcal{Z}_{t-}$ scenario.
\begin{remark} (Definition of non-standard conditional expectations.) \label{remark:Conditioning} \vspace{0.15cm} \\
One should be careful about the definition of $\mathbb{E}[\mathcal{P}(t) \mid \xi_n, T_{n+1} > t]$ and similar quantities outside $(T_{n+1} > t)$, confer with e.g.\ \citet{ChristiansenFurrer2021}. In this paper, it corresponds to the version 
\begin{align*}
    \mathbb{E}[\mathcal{P}(t) \mid \xi_n, T_{n+1} > t] = \frac{\mathbb{E}[\mathcal{P}(t) 1_{(T_{n+1} > t)} \mid \xi_n]}{\mathbb{E}[1_{(T_{n+1} > t)} \mid \xi_n]}
\end{align*}
under the convention $0/0=0$ and where the expectations are the regular conditional expectations constructed in~\citet{Jacobsen2006}. That this version is the relevant one follows from the proof of Theorem~\ref{thm:Dynamics}.
\demormk
\end{remark}

\noindent We then have the following theorem:

\begin{theorem} (Transaction time reserve dynamics.) \label{thm:Dynamics} \vspace{0.15cm} \\
For $t \geq 0$, it holds that

\begin{align}\label{eq:transac_dyn}
    \mathcal{V}(\diff t) &= \mathcal{V}(t-) \frac{\kappa(\diff t)}{\kappa(t-)}-\mathcal{B}(\diff t) +  \int_{\mathbb{R}^\infty}  \mathcal{R}(t,\zeta) \;  (\mu_{\mathcal{Z}}-\Lambda_{\mathcal{Z}})(\diff t,\diff \zeta). 
\end{align}
\end{theorem}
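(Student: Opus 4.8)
The plan is to follow the classical martingale-representation route to Thiele-type dynamics, cf.\ \citet{Christiansen2020} and~\citet{Norberg1992}, adapted to the PDP~$\mathcal{Z}$. First I would introduce the auxiliary process $M(t)=\frac{1}{\kappa(t)}\mathcal{V}(t)+\int_{(0,t]}\frac{1}{\kappa(s)}\mathcal{B}(\diff s)$ and note, using $\mathcal{V}(t)=\mathbb{E}[\mathcal{P}(t)\mid\mathcal{F}^{\mathcal{Z}}_t]$ together with $\frac{1}{\kappa(t)}\mathcal{P}(t)=\int_{(t,\infty)}\frac{1}{\kappa(s)}\mathcal{B}(\diff s)$ and the incremental $\mathcal{F}^{\mathcal{Z}}$-adaptedness of $\mathcal{B}$, that $M(t)=\mathbb{E}\big[\int_{(0,\infty)}\frac{1}{\kappa(s)}\mathcal{B}(\diff s)\bigm|\mathcal{F}^{\mathcal{Z}}_t\big]$; since $\int_{(0,\infty)}\frac{1}{\kappa(s)}\mathcal{B}(\diff s)=\mathcal{P}(0)\in L^1$, this is a genuine (Doob) $\mathcal{F}^{\mathcal{Z}}$-martingale. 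Moreover, between consecutive jumps of~$\mathcal{Z}$ both $\mathcal{V}(t)$ (in the fixed right-continuous version) and $\int_{(0,t]}\frac{1}{\kappa(s)}\mathcal{B}(\diff s)$ are deterministic functions of $t$ and of the MPP history $\xi_{\langle t\rangle}$ of~$\mathcal{Z}$ --- for the latter because, by the basic bi-temporal structure assumptions, the running payments are driven by $X^t=X^{T_{\langle t\rangle}}$ and backpay occurs only at jumps of~$Z$ --- so $M$ is itself piecewise deterministic with respect to the jumps of~$\mathcal{Z}$ and, in particular, has no continuous martingale part.

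Next I would apply the martingale representation theorem for marked point processes, in the form of~\citet[Section~4]{Jacobsen2006} (after checking the integrability condition $\mathbb{E}[\mu_{\mathcal{Z}}([0,t]\times D)]<\infty$), to obtain $M(\diff t)=\int_{\mathbb{R}^\infty}\widetilde{\mathcal{R}}(t,\zeta)\,(\mu_{\mathcal{Z}}-\Lambda_{\mathcal{Z}})(\diff t,\diff\zeta)$ for a predictable integrand~$\widetilde{\mathcal{R}}$, and then identify~$\widetilde{\mathcal{R}}$. As $M$ has no continuous martingale part, $\widetilde{\mathcal{R}}(t,\zeta)$ is the predictable version of the jump of $M$ at time~$t$ along a transition of~$\mathcal{Z}$ into $\zeta=(\zeta_z,\zeta_h)$. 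Decomposing $M=\frac{1}{\kappa}\mathcal{V}+\int_{(0,\cdot]}\frac{1}{\kappa(s)}\mathcal{B}(\diff s)$ and computing this jump on $(T_n<t\leq T_{n+1})$: the accumulated-payments term jumps by $B^\circ((f_{\zeta_h}(s))_{0\leq s\leq t})-B^\circ(X^{t-}_{[0,t]})$, the retroactive correction (backpay) caused by replacing the current history with~$\zeta_h$, while $\frac{1}{\kappa}\mathcal{V}$ jumps by $\frac{1}{\kappa(t)}\big(\mathbb{E}[\mathcal{P}(t)\mid\xi_n,(T_{n+1},\mathcal{Z}_{T_{n+1}})=(t,\zeta)]-\mathbb{E}[\mathcal{P}(t)\mid\xi_n,T_{n+1}>t]\big)$, where the left limit $\mathcal{V}(t-)$ is evaluated along $(T_n<s<T_{n+1})$ --- which is exactly what singles out the version of the conditional expectation described in Remark~\ref{remark:Conditioning}. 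Summing the two contributions gives $\widetilde{\mathcal{R}}(t,\zeta)=\frac{1}{\kappa(t)}\mathcal{R}(t,\zeta)$, hence $\kappa(t)\,M(\diff t)=\int_{\mathbb{R}^\infty}\mathcal{R}(t,\zeta)\,(\mu_{\mathcal{Z}}-\Lambda_{\mathcal{Z}})(\diff t,\diff\zeta)$.

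Finally I would recover the dynamics of~$\mathcal{V}$ from $\mathcal{V}(t)=\kappa(t)A(t)$ with $A=M-\int_{(0,\cdot]}\frac{1}{\kappa(s)}\mathcal{B}(\diff s)$, via the integration-by-parts formula $\mathcal{V}(\diff t)=\kappa(t-)\,A(\diff t)+A(t-)\,\kappa(\diff t)+\Delta\kappa(t)\,\Delta A(t)$. Here $A(t-)\,\kappa(\diff t)=\mathcal{V}(t-)\,\frac{\kappa(\diff t)}{\kappa(t-)}$ gives the interest term; the $\mathcal{B}$-parts of $\kappa(t-)\,A(\diff t)$ and of $\Delta\kappa(t)\,\Delta A(t)$ combine, using $\kappa(t-)+\Delta\kappa(t)=\kappa(t)$, to $-\mathcal{B}(\diff t)$; and the $M$-parts combine in the same fashion to $\int_{\mathbb{R}^\infty}\mathcal{R}(t,\zeta)\,(\mu_{\mathcal{Z}}-\Lambda_{\mathcal{Z}})(\diff t,\diff\zeta)$, which yields~\eqref{eq:transac_dyn}; when $\kappa$ is continuous (e.g.\ Example~\ref{ex:BankAccount}) the cross term vanishes and this step is immediate. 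I expect the main obstacle to be the second step --- rigorously pinning down $\widetilde{\mathcal{R}}(t,\zeta)=\frac{1}{\kappa(t)}\mathcal{R}(t,\zeta)$ --- because it requires care with the left limit $\mathcal{V}(t-)$, with the backpay jump of $\mathcal{B}$ at jumps of~$\mathcal{Z}$, and with the precise versions of the conditional expectations entering~$\mathcal{R}$.
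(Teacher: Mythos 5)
Your proposal is correct and follows essentially the same route as the paper: the same martingale $M_t=\mathbb{E}[\mathcal{P}^{\circ}(0)\mid\mathcal{F}^{\mathcal{Z}}_t]=\frac{1}{\kappa(t)}\mathcal{V}(t)+\int_{(0,t]}\frac{1}{\kappa(s)}\mathcal{B}(\diff s)$, the same appeal to Jacobsen's MPP martingale representation theorem with the integrand identified as the predictable jump of $M$ (split into the time-$t$ payment/backpay difference and the difference of reserves, with exactly the version of the conditional expectations from Remark~\ref{remark:Conditioning}), and the same concluding integration by parts, where your symmetric formula with the cross term $\Delta\kappa(t)\,\Delta A(t)$ is equivalent to the paper's asymmetric form $\mathcal{V}^{\circ}(t-)\kappa(\diff t)+\kappa(t)\mathcal{V}^{\circ}(\diff t)$. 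The obstacle you single out --- rigorously constructing the measurable versions $g^{n}_{h_n}(t)$ of the conditional expectations off the event $(T_{n+1}>t)$ --- is precisely where the paper's proof invests its technical effort.
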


\begin{proof}
Introduce 
\begin{align*}
\mathcal{P}^{\circ}(t) = \frac{1}{\kappa(t)}\mathcal{P}(t).
\end{align*}
We have that $\mathcal{P}^{\circ}(0) = \mathcal{P}(0)$, which is assumed integrable, so we can define 
\begin{align*}
t \mapsto M_t = \mathbb{E}\big[\mathcal{P}^{\circ}(0) \mid \mathcal{F}^\mathcal{Z}_t\big],
\end{align*}
which is a martingale. Since $\mathcal{V}$ is presumed $\mathcal{F}^{\mathcal{Z}}$-adapted and $\mathbb{P}$-a.s.\ right-continuous, the same holds for a version of $M$, since $M_t=\frac{1}{\kappa(t)}\mathcal{V}(t)+\mathcal{P}^{\circ}(0)-\mathcal{P}^{\circ}(t)$. Then a martingale representation theorem, namely Theorem 4.6.1 of~\citet{Jacobsen2006}, gives the existence of predictable processes $S^\zeta_s$ such that
\begin{align*}
M_t = M_0 + \int_{(0,t] \times \mathbb{R}^\infty} S^\zeta_s \; (\mu_{\mathcal{Z}}-\Lambda_{\mathcal{Z}})(\diff s,\diff \zeta)
\end{align*}
$\mathbb{P}$-a.s.\ simultaneously over $t$. 
Using the adaptedness of $M$, we can, as in the proof of the aforementioned Theorem~4.6.1, use Proposition~4.2.1(biii) of~\citet{Jacobsen2006} to write
\begin{align*}
    M_t &= \sum_{n=0}^\infty 1_{(T_n \leq t < T_{n+1})} g^{n}_{\xi_n}(t)
\end{align*}
for measurable functions $(h_n,t) \mapsto g^n_{h_n}(t)$. Due to $M$ being a conditional expectation, we can use Corollary 4.2.2 of~\citet{Jacobsen2006} to identify
\begin{align*}
g^{n}_{\xi_n}(t) = \mathbb{E}[\mathcal{P}^{\circ}(0) \mid \xi_n, T_{n+1} > t]
\end{align*}
on $(T_n \leq t < T_{n+1})$. To identify the function for all $(h_n,t)$, we first observe that according to Remark~4.2.3 of~\citet{Jacobsen2006},
\begin{align*}
    g^{n}_{\xi_n}(t) = \frac{\mathbb{E}[\mathcal{P}^{\circ}(0) 1_{(T_{n+1} > t)} \mid \xi_n]}{\mathbb{E}[1_{(T_{n+1} > t)} \mid \xi_n]}
\end{align*}
on $(T_n \leq t < T_{n+1})$. Define the functions $(h_n,t) \mapsto g^n_{h_n}(t)$ as
\begin{align*}
    g^n_{h_n}(t) = \frac{\mathbb{E}[\mathcal{P}^{\circ}(0) 1_{(T_{n+1} > t)} \mid \xi_n = h_n]}{\mathbb{E}[1_{(T_{n+1} > t)} \mid \xi_n = h_n]}
\end{align*}
on the set $D_n=\{(h_n,t): \mathbb{E}[1_{(T_{n+1} > t)} \mid \xi_n = h_n] \neq 0 \}$ and zero otherwise. These functions are well-defined since the conditional expectations are regular and fixed. By the above calculations, they satisfy the required identity of Proposition~4.2.1(biii) in~\citet{Jacobsen2006}. For the measurability condition, note first that $\mathbb{E}[1_{(T_{n+1} > t)} \mid \xi_n = h_n]$ is measurable as a function of $h_n$ since it is a regular conditional expectation, and that it is jointly measurable as a function of $(h_n,t)$ since it is right-continuous as a function of $t$ for any $h_n$ by the dominated convergence theorem. This implies that $D_n$ is measurable. By the same arguments, $\mathbb{E}[\mathcal{P}^{\circ}(0) 1_{(T_{n+1} > t)} \mid \xi_n = h_n]$ is seen to be jointly measurable as a function of $(h_n,t)$. From this we may conclude that $(h_n,t) \mapsto g^n_{h_n}(t)$ is measurable, so it especially satisfies Proposition~4.2.1(biii) in~\citet{Jacobsen2006}. In the following, we write $\mathbb{E}[\mathcal{P}^{\circ}(0) \mid \xi_n, T_{n+1} > t]$ for $g^n_{\xi_n}(t)$, but this is merely notation; calculations with $\mathbb{E}[\mathcal{P}^{\circ}(0) \mid \xi_n, T_{n+1} > t]$ actually use the properties of $g_{h_n}^n(t)$.

The proof of the aforementioned Theorem 4.6.1 furthermore gives that 
\begin{align*}
    S^\zeta_t &= \sum_{n=0}^\infty 1_{(T_n < t \leq T_{n+1})} \Big( g^{n+1}_{(\xi_n,(t,\zeta))}(t)-g^{n}_{\xi_n}(t) \Big),
\end{align*}
so that
\begin{align*}
    S^\zeta_t &= \sum_{n=0}^\infty 1_{(T_n < t \leq T_{n+1})} \big( \mathbb{E}[\mathcal{P}^{\circ}(0) \mid \xi_n, (T_{n+1},\mathcal{Z}_{T_{n+1}})=(t,\zeta)] -  \mathbb{E}[\mathcal{P}^{\circ}(0) \mid \xi_n, T_{n+1} > t] \big) \\
    &= \sum_{n=0}^\infty 1_{(T_n < t \leq T_{n+1})} \big( \mathbb{E}[\mathcal{P}^{\circ}(t-) \mid \xi_n, (T_{n+1},\mathcal{Z}_{T_{n+1}})=(t,\zeta)] -  \mathbb{E}[\mathcal{P}^{\circ}(t-) \mid \xi_n, T_{n+1} > t] \big)
\end{align*}
using that $\mathcal{P}^{\circ}(0)-\mathcal{P}^{\circ}(t-) = \int_{(0,t)} \frac{1}{\kappa(s)} \mathcal{B}(ds)$ are $\xi_n$-measurable on $(T_n < t \leq T_{n+1} )$. Therefore, the dynamics of $M$ is
\begin{align*}
    \diff M_t &= \int_{\mathbb{R}^\infty} S^\zeta_t \: (\mu_{\mathcal{Z}} - \Lambda_{\mathcal{Z}})(\diff t,\diff \zeta) \\
    &= \sum_{n=0}^\infty \int_{\mathbb{R}^\infty} 1_{(T_n < t \leq T_{n+1})} \Big( \mathbb{E}[\mathcal{P}^{\circ}(t-) \mid \xi_n, (T_{n+1},\mathcal{Z}_{T_{n+1}})=(t,\zeta)] \\
    & \qquad \qquad \qquad \qquad \qquad \quad  -  \mathbb{E}[\mathcal{P}^{\circ}(t-) \mid \xi_n, T_{n+1} > t] \Big) \: (\mu_{\mathcal{Z}} - \Lambda_{\mathcal{Z}})(\diff t,\diff \zeta).
\end{align*}
Using that 
\begin{align} \label{eq:PZ'}
    \mathcal{P}^{\circ}(t)-\mathcal{P}^{\circ}(0) = -\int_{(0,t]} \frac{1}{\kappa(s)} \mathcal{B}(\diff s)
\end{align}
is $\mathcal{F}^\mathcal{Z}$-adapted, we get
\begin{align*}
\mathbb{E}[\mathcal{P}^{\circ}(0) \mid \mathcal{F}^\mathcal{Z}_t] - \mathbb{E}[\mathcal{P}^{\circ}(0) \mid \mathcal{F}^\mathcal{Z}_0] = \mathbb{E}[\mathcal{P}^{\circ}(t) \mid \mathcal{F}^\mathcal{Z}_t] - \mathbb{E}[\mathcal{P}^{\circ}(0) \mid \mathcal{F}^\mathcal{Z}_0] - (\mathcal{P}^{\circ}(t)-\mathcal{P}^{\circ}(0)),
\end{align*}
which upon rearrangement becomes
\begin{align*}
\mathbb{E}[\mathcal{P}^{\circ}(t) \mid \mathcal{F}^\mathcal{Z}_t] - \mathbb{E}[\mathcal{P}^{\circ}(0) \mid \mathcal{F}^\mathcal{Z}_0] = \mathcal{P}^{\circ}(t)-\mathcal{P}^{\circ}(0) + \mathbb{E}[\mathcal{P}^{\circ}(0) \mid \mathcal{F}^\mathcal{Z}_t] - \mathbb{E}[\mathcal{P}^{\circ}(0) \mid \mathcal{F}^\mathcal{Z}_0].
\end{align*}
Introducing
\begin{align*}
\mathcal{V}^{\circ}(t) = \frac{1}{\kappa(t)}\mathcal{V}(t) = \mathbb{E}[\mathcal{P}^{\circ}(t) \mid \mathcal{F}^\mathcal{Z}_t],
\end{align*}
we can write this as 
\begin{align*}
\mathcal{V}^{\circ}(t)-\mathcal{V}^{\circ}(0) = \mathcal{P}^{\circ}(t)-\mathcal{P}^{\circ}(0) + M_t - M_0.
\end{align*}
The identity~\eqref{eq:PZ'} furthermore gives
\begin{align*}
\mathcal{P}^{\circ}(\diff t) = -\frac{1}{\kappa(t)}\mathcal{B}(\diff t).
\end{align*}
The above calculations imply
\begin{align*}
    \mathcal{V}^{\circ}(\diff t) &= \mathcal{P}^{\circ}(\diff t) + \diff M_t \\
    &= -\frac{1}{\kappa(t)} \mathcal{B}(\diff t) + \sum_{n=0}^\infty \int_{\mathbb{R}^\infty} 1_{(T_n < t \leq T_{n+1})} \Big( \mathbb{E}[\mathcal{P}^{\circ}(t-) \mid \xi_n, (T_{n+1},\mathcal{Z}_{T_{n+1}})=(t,\zeta)] \\
    & \qquad \qquad \qquad \qquad \qquad \qquad \qquad \qquad \quad -  \mathbb{E}[\mathcal{P}^{\circ}(t-) \mid \xi_n, T_{n+1} > t] \Big) \: (\mu_{\mathcal{Z}} - \Lambda_{\mathcal{Z}})(\diff t,\diff \zeta).
\end{align*}
The time $t$ payment $\frac{1}{\kappa(t)} \mathcal{B}(\{t\})$ can be taken out of both intergrands, and this amounts to $B^\circ( (f_{\zeta_h }(s))_{0 \leq s \leq t} ) -B^\circ( X^{t-}_{[0,t]} )$. It is the difference in the payment at time $t$ between a jump and a remain scenario when $\mathcal{Z}$ jumps to $\zeta$. Taking out the time $t$ payment, we get $\mathcal{P}^{\circ}(t-) = \frac{1}{\kappa(t)}\mathcal{B}(\{t\})+\mathcal{P}^{\circ}(t)$, and using integration by parts, we finally have
\begin{align*}
    \mathcal{V}(\diff t) &= \diff (\kappa(t)\mathcal{V}^{\circ}(t)) \\
    &= \mathcal{V}^{\circ}(t-) \kappa(\diff t)+\kappa(t) \mathcal{V}^{\circ}(\diff t) \\
    &= \mathcal{V}(t-) \frac{\kappa(\diff t)}{\kappa(t-)}-\mathcal{B}(\diff t) \\
    & \quad + \sum_{n=1}^\infty \int_{\mathbb{R}^\infty}  1_{(T_n < t \leq T_{n+1})} \Big( B^\circ( (f_{\zeta_h }(s))_{0 \leq s \leq t} ) -B^\circ( X^{t-}_{[0,t]} )  \\
    & \qquad \qquad +\mathbb{E}[\mathcal{P}(t) \mid \xi_n, (T_{n+1},\mathcal{Z}_{T_{n+1}})=(t,\zeta)] -\mathbb{E}[\mathcal{P}(t) \mid \xi_n, T_{n+1} > t] \Big) \, (\mu_{\mathcal{Z}}-\Lambda_{\mathcal{Z}})(\diff t,\diff \zeta),
\end{align*}
which yields the desired result by definition of the sums at risk.
\end{proof}
\noindent Theorem~\ref{thm:Dynamics} shows that the transaction time reserve $\mathcal{V}$ changes with interest accrual $\mathcal{V}(t-) \frac{\kappa(\diff t)}{\kappa(t-)}$, actual benefits less premiums $\mathcal{B}(\diff t)$ and a martingale term $\int_{\mathbb{R}^\infty}  \mathcal{R}(t,\zeta) \;  (\mu_{\mathcal{Z}}-\Lambda_{\mathcal{Z}})(\diff t,\diff \zeta)$, which is the sums at risk integrated with respect to the underlying compensated random counting measure. The martingale term may be interpreted as stochastic noise since it is a mean-zero process, and may thus be used for model validation and back-testing purposes. Actual applications are outside the scope of this paper.

One could alternatively have derived Theorem~\ref{thm:Dynamics} from Theorem~7.1 in~\citet{Christiansen2021Time}, which is an explicit martingale representation theorem that holds even when the information being conditioned on is non-monotone. The proof presented here is however more concise, as our information $\mathcal{F}^\mathcal{Z}$ is monotone, so more standard results apply. Theorem~\ref{thm:Dynamics} is similar to Proposition~3.2 in~\citet{Christiansen2020}, but differs among other things by not being restricted to state processes taking values in a finite space.

\begin{remark} (Dynamics of valid time reserve.) \label{remark:ClassicDynamics} \vspace{0.15cm} \\
Define the random counting measure $\mu_{X}$ corresponding to $X$ by 
\begin{align*}
    \mu_X(C) &= \sum_{n=1}^\infty 1_C(\tau_n,X_{\tau_n}), \hspace{0.5cm} C \in \mathbb{B}( [0,\infty) )  \otimes \mathbb{B}(\mathbb{R}^d).
\end{align*} Let $\Lambda_X$ be the compensating measure for $\mu_X$, and let $\gamma_n = (\tau_1,...,\tau_n ; X_{\tau_1},...,X_{\tau_n})$ be the MPP history of $X$ at time~$\tau_n$. By the same calculations as for Theorem~\ref{thm:Dynamics}, we find the dynamics of the valid time reserve $V$:
\begin{align}\label{eq:valid_dyn}
    V(\diff t) &= V(t-) \frac{\kappa(\diff t)}{\kappa(t-)}-B(\diff t) +  \int_{\mathbb{R}^d}  R(t,y) \;  (\mu_{X}-\Lambda_{X})(\diff t,\diff y) 
\end{align}
for the sums at risk 
\begin{align*}
    R(t,y) &= \sum_{n=1}^\infty 1_{(\tau_n < t \leq \tau_{n+1})} \Big(B\big( (f_{ (H_{t-},(t,y)) }(s))_{0 \leq s \leq t} ,\{t\}\big) - B\big( (f_{ H_{t-} }(s))_{0 \leq s \leq t}  ,\{t\}\big) \\ & \quad \qquad \qquad  \qquad \qquad  +\mathbb{E}[ P(t) \mid \gamma_n, (\tau_{n+1},X_{\tau_{n+1}})=(t,y)] -\mathbb{E}[P(t) \mid \gamma_n, \tau_{n+1} > t] \Big).
\end{align*}
This result is again similar to Proposition~3.2 in~\citet{Christiansen2020}, but still differs among other things by not being restricted to state processes taking values in a finite space. The conditional expectations are to be interpreted as in Remark~\ref{remark:Conditioning}.

Suppose now that $X$ is a pure Markov jump process on a finite state space $E=\{1,2,...,J\}$ with payments specified as in Example~\ref{ex:MarkovSemiMarkovCashFlow}. In other words, the valid time payments consist of deterministic sojourn payments $t \mapsto B_j(t)$ and deterministic transition payments $t \mapsto b_{jk}(t)$. Then
\begin{align*}
B\big( (f_{ (H_{t-},(t,y)) }(s))_{0 \leq s \leq t} ,\{t\}\big) - B\big( (f_{ H_{t-} }(s))_{0 \leq s \leq t}  ,\{t\}\big) = b_{X_{t-}y}(t).
\end{align*}
Furthermore,
\begin{align*}
\mu_X(\mathrm{d}t, \{k\})
=
N_{X_{t-}k}(\mathrm{d}t)
\end{align*}
and, since $X$ is Markovian,
\begin{align*}
\Lambda_X(\mathrm{d}t, \{k\})
=
\Lambda_{X_{t-}k}(\mathrm {d}t) 
\end{align*}
for suitably regular cumulative transition rates $t \mapsto \Lambda_{jk}(t)$.  Consequently, by invoking the Markov property, the dynamics~\eqref{eq:valid_dyn} read
\begin{align}\label{eq:valid_dyn_Markov}
\begin{split}
    V(\diff t) &= V(t-) \frac{\kappa(\diff t)}{\kappa(t-)}-B(\diff t) \\
    &\quad + \sum_{ \substack{j,k = 1 \\ j \neq k } }^J \mathds{1}_{(X_{t-} = j)} \big(b_{jk}(t) +\mathbb{E}[P(t) \, | \, X_t = k] - \mathbb{E}[P(t) \, | \, X_t = j]\big) \big(N_{jk}(\mathrm{d}t) - \Lambda_{jk}(\mathrm{d}t)\big).
\end{split}
\end{align}
This constitutes a significant simplification.\demormk
\end{remark}

\noindent In comparing~\eqref{eq:transac_dyn} with~\eqref{eq:valid_dyn}, it is apparent that the transaction and valid time reserves admit comparable dynamics. In both cases, there is a contribution due to interest accrual, a contribution from benefits less premiums, and finally a martingale term. In general, the dynamics of the transaction time reserve are more complicated than that of the valid time reserve -- for two reasons. First, the martingale term is more involved, which stems from the fact that the model for $\mathcal{Z}$ is typically more elaborate than that for $X$. Second, the accumulated cash flow in transaction time $\mathcal{B}$ is a complicated function of, among other things, the accumulated cash flow in valid time $B$. The difference might be particularly striking under the quite common assumption that $X$ is a pure Markov jump process on a finite state space $E = \{1,2,\ldots,J\}$ and the valid time payments consist of deterministic sojourn and transition payments. In this case, the dynamics of the valid time reserve simplify, cf.~\eqref{eq:valid_dyn_Markov}, but there is in general no reason why this simplification should carry over to the transaction time reserve -- unless further assumptions are imposed.

\section*{Acknowledgments and declarations of interest}

We would like to thank an anonymous referee for very helpful comments and suggestions. Oliver Lunding Sandqvist's research has partly been funded by the Innovation Fund Denmark (IFD) under File No.\ 1044-00144B. The authors declare no conflicts of interest.

\end{document}